\documentclass[a4paper,onecolumn,11pt,accepted=2024-12-17]{quantumarticle}
\pdfoutput=1
\usepackage{makecell}
\usepackage{array}
\newcolumntype{C}[1]{>{\centering\arraybackslash}m{#1}}
\usepackage{booktabs,adjustbox}
\usepackage{float}
\usepackage[utf8]{inputenc}  
\usepackage[T1]{fontenc}     
\usepackage[table]{xcolor}
\usepackage{color,soul}
\usepackage[british]{babel}  

\usepackage[colorlinks=true, citecolor=blue, urlcolor=blue]{hyperref}  
\usepackage{graphicx} 
\usepackage[babel]{microtype}  
\usepackage{amsmath,amssymb,amsthm,bm,amsfonts,mathrsfs,bbm} 

\usepackage{xspace}  
\usepackage{pgf,tikz}
\usepackage{xcolor}
\usepackage{multirow}
\usepackage{array}
\usepackage{bigstrut}
\usepackage{braket}
\usepackage{color}
\usepackage[numbers]{natbib}
\usepackage{multirow}
\usepackage{mathtools}
\usepackage{float}
\usepackage{blkarray}
\usepackage[caption = false]{subfig}
\usepackage{xcolor,colortbl}
\usepackage{color}
\usepackage{rotating}
\usepackage{tikz}
\usetikzlibrary{quantikz}
\usepackage{tabularx}
\usepackage{subfig}
\newcommand{\Tr}{\operatorname{Tr}}

\newcommand{\be}{\begin{equation}}
\newcommand{\ee}{\end{equation}}
\newcommand{\ba}{\begin{eqnarray}}
\newcommand{\ea}{\end{eqnarray}}
\newcommand{\ketbra}[2]{|#1\rangle \langle #2|}
\newcommand{\tr}{\operatorname{Tr}}

\newtheorem{theorem}{Theorem}

\newtheorem{definition}{Definition}
\newtheorem{proposition}{Proposition}

\newtheorem{lemma}{Lemma}






\def\>{\rangle}
\def\<{\langle}







\usepackage{centernot}
\usepackage{subfig}

\begin{document}


\title{Bipartite polygon models: classes of entanglement and their nonlocal behaviour}

\author{Mayalakshmi Kolangatt}
\affiliation{School of Physics, IISER Thiruvananthapuram, Vithura, Kerala 695551, India.}

\author{Thigazholi Muruganandan}
\affiliation{School of Physics, IISER Thiruvananthapuram, Vithura, Kerala 695551, India.}

\author{Sahil Gopalkrishna Naik}
\affiliation{Department of Physics of Complex Systems, S.N. Bose National Center for Basic Sciences, Block JD, Sector III, Salt Lake, Kolkata 700106, India.}

\author{Tamal Guha}
\affiliation{Department of Computer Science, The University of Hong Kong, Pokfulam road 999077, Hong Kong.}

\author{Manik Banik}
\affiliation{Department of Physics of Complex Systems, S.N. Bose National Center for Basic Sciences, Block JD, Sector III, Salt Lake, Kolkata 700106, India.}

\author{Sutapa Saha}
\affiliation{Department of Astrophysics and High Energy Physics, S.N. Bose National Center for Basic Sciences, Block JD, Sector III, Salt Lake, Kolkata 700106, India.}
\affiliation{Harish-Chandra Research Institute, HBNI, Chhatnag Road, Jhunsi, Allahabad 211 019, India.}

\begin{abstract}
Hardy's argument constitutes an elegantly logical test for identifying nonlocal features of multipartite correlations. In this paper, we investigate Hardy's nonlocal behavior within a broad class of operational theories, including the qubit state space as a specific case. Specifically, we begin by examining a wider range of operational models with state space descriptions in the form of regular polygons. First, we present a systematic method to characterize the possible forms of entangled states within bipartite compositions of these models. Then, through explicit examples, we identify the classes of entangled states that exhibit Hardy-type nonlocality. Remarkably, our findings highlight a closer analogy between odd polygon models and the qubit state space in terms of their bipartite Hardy nonlocal behavior compared to even-sided polygons. Furthermore, we demonstrate that the emergence of mixed-state Hardy nonlocality in any operational model is determined by a specific symmetry inherent in its dynamic description. Finally, our results uncover an unexplored class of almost-quantum correlations that can be associated with an explicit operational model.

\end{abstract}



\maketitle
\section{Introduction}
Entanglement, the exotic state space feature of composite quantum systems, establishes one of the most striking departures of quantum theory from the classical worldview \cite{Horodecki2009}. While marginal states obtained from maximally entangled states are always completely mixed, quantum theory also allows non-maximally entangled states to have partially mixed marginals. In the operational paradigm of local operations and classical communications (LOCC), latter states are less useful than the former one. However, there exist other operational paradigms where non-maximal states outperform the maximal ones. The study of nonlocality as initiated by the seminal Bell's theorem provides such an instance \cite{Bell64} (see also \cite{Bell66,Mermin93,Brunner14a}). While the celebrated Clauser-Horne-Shimony-Holt (CHSH) inequality is maximally violated by two-qubit maximally entangled states \cite{Clauser70,Cirel'son80}, there exist Bell inequalities, known by the name of tilted CHSH inequalities, that are maximally violated by non-maximally entangled pure two-qubit states \cite{Acin12,Coladangelo17}. In fact, there is a variant of the nonlocality test, proposed by L. Hardy \cite{Hardy93}, that establishes the nonlocal behaviour of any two-qubit pure non-maximally entangled state, whereas maximally entangled state fails the test \cite{Goldstein94, Kar97}. Subsequently, the superiority of non-maximally entangled pure states over the maximal ones has been depicted in Bayesian games \cite{Banik19a} and in reverse zero-error channel coding \cite{Alimuddin2023}.

A natural question is whether the existence of such non-maximally entangled states is truly a quantum feature or not. To address this question we study a general class of theories allowed within the framework of generalized probability theory (GPT) \cite{Janotta11,Janotta14,Brunner14,Banik19b,Bhattacharya20,Saha20(1)}. One can start with a simple toy theory with elementary state space described by a square and the composition of two such elementary systems allows an entangled state that is even more nonlocal than quantum theory \cite{Popescu94}. However, this toy model does not allow any joint state that can be thought of as an analogue to a non-maximally entangled pure state. This might lead to an intuition that non-maximal entanglement is a truly quantum signature. In the present work, first, we show that this intuition is not true and as a consequence, we come up with several important observations steamed out from it.

To explore the possibilities of non-maximally entangled pure states in beyond quantum theories, we begin with a class of GPT systems whose normalized state spaces are described by regular polygons \cite{Janotta11}. We then consider the bipartite compositions of these models and make a systemic study on the structure of allowed entangled states. Note that nonlocal properties of the correlations obtained from a particular entangled state (with maximally mixed marginals) of these systems have been studied in \cite{Janotta11}. However, we show that there exist other classes of pure entangled states that are not connected with the earlier ones via local reversible transformations. In particular, for bipartite pentagon system, we characterize the entangled states into two inequivalent classes, whereas for the bipartite hexagon system, six different entanglement classes are possible. We then study the nonlocal strength of correlations obtained from these different classes of entangled states.
It is already reported in \cite{Janotta11} that bipartite compositions of even-gon theories admit maximal CHSH violation beyond the Tsirelson's bound, which readily assures their unphysicality. However, this argument does not extend to the bipartite composition of odd-gon models as the CHSH violation in those models is upper bounded by Tsirelson's value. In contrast, our results, which deal with the Hardy-type nonlocality \cite{Hardy93}, reveal that the bipartite compositions of these operational models, with odd numbers of vertices, are able to produce beyond-quantum local input-output statistics. Moreover, we show that, unlike the two-qubit mixed entangled states, a class of mixed entangled states in such discrete operational models exhibit Hardy-type nonlocality. In addition, by putting restrictions on the maximal numbers of distinguishable pure preparations in any operational model, we figure out the kinematical topology that forbids such mixed entangled preparations of that theory to exhibit Hardy-type nonlocality.  

On the other hand, there exists a set of post-quantum correlations, namely the \textit{almost quantum correlations}, which satisfies all possible (known till now) bipartite physical principles \cite{Navascues15}. The operational models, able to mimic this particular set of correlations, are required to violate the no-restriction hypothesis \cite{Sainz18}, which simply demands that all possible mathematically consistent effects should be taken as valid observable\textcolor{blue}{s} for any GPT. However, it is not known whether a nontrivial (i.e., strictly non-quantum) proper subset of almost quantum correlations can be reproduced by an operational model, without losing the no-restriction hypothesis. In the present work, we identify a class of measure non-zero set of correlations obtained from odd-gon models, specifically from the bipartite pentagon models without invoking any restriction on them, which are strictly almost quantum correlations. Note that our result is stronger than that of \cite{Janotta11}, where Janotta \textit{et. al.} has shown that all the local correlations obtained from the maximally entangled state of odd-gon theories must lie in the set of almost quantum correlations. This is because the latter one is not established as strictly (i.e., other than quantum) almost quantum correlations.

The remaining sections of the paper are organized
as follows: in Section \ref{two} we discuss the framework and recall some relevant results required for our purpose. In Section \ref{three} we classify the entangled states, and in Section \ref{four} study their Hardy's nonlocality behaviour. As a consequence, in Section \ref{five}, we demonstrate a one-parameter correlation that strictly resides in the almost quantum realm but doesn't belong to quantum. In Section \ref{six}, the study regards the region of intersection between the correlation obtained from quantum theory and the odd-gon theory. In Section \ref{seven} we show that the concepts of entanglement and nonlocality in the polygon models are different, and finally, we put our concluding remarks in Section \ref{eight}.   

\section{Preliminaries}\label{two}
In this section, we briefly recall some basic concepts that will be required to present our main technical results. In the following, we start by describing the framework of GPT. 
\subsection{Generalized probability theories}
This mathematical framework is broad enough to encapsulate all possible probabilistic theories that use the notion of states to yield the outcome probabilities of measurements. Although the origin of this framework dates back to the nineteen sixties \cite{Ludwig,Mielni68}, it has drawn renewed interest recently. For an elaborate survey of this framework within the language of quantum information theory, we refer to the works \cite{Hardy01,Barrett07,Chiribella10,Barnum11}. In a GPT, a system $S$ is specified by identifying the three-tuple $\left(\Omega_S,\mathcal{E}_S,\mathcal{T}_S\right)$ of the state space, effect space, and the set of transformations.  

{\it State space:} $\Omega_S$ denotes the set of normalized states of the system, where a state is a mathematical object yielding outcome probabilities for all the measurements that can possibly be carried out on this system. Generally, $\Omega_S$ is considered to be a compact convex set embedded in some real vector space $V$. The extreme points of this set are called pure states. 

{\it Effect space:} An effect $e$ is a map $e:\Omega_S\to[0,1]$, where $e(\omega)$ denotes probability of obtaining the outcome corresponding to effect $e$ when a measurement is performed on the state $\omega\in\Omega_S$. The effects are considered to be linear functionals, $e[p\omega_1+(1-p)\omega_2]=pe(\omega_1)+(1-p)e(\omega_2),~\forall~\omega_1,\omega_2\in\Omega_S,~\&~\forall~p\in[0,1]$. There is a special effect, called unit effect, which is defined by $u(\omega)=1,~\forall~\omega\in\Omega_S$. The set of all proper effects is denoted as $\mathcal{E}_S\equiv\{e~|~0\le e(\omega)\le1,~\forall~\omega\in\Omega_S\}$. It is the convex hull of zero effect, unit effect, and the extremal effects and embedded in the vector space $V^\star$ dual to $V$. A measurement $\mathcal{M}$ is a collection of effects adding to unit effect, {\it i.e.} $\mathcal{M}\equiv\{e_i\in\mathcal{E}_S~|~\sum_ie_i=u\}$.

{\it State and effect cones:} Mathematically it is sometimes convenient to work with the notion of unnormalized states and effects. The set of unnormalized states forms a cone $V_+\subset V$, where $r\omega\in V_+$ for $r\ge0$ and $\omega\in\Omega_S$. The set of unnormalized effects forms a dual cone $V_+^\star\subset V^\star$, where $V_+^\star\equiv\left\{e~|~e(\omega)\ge0,~\forall~\omega\in V^\star_+\right\}$. In this cone picture, the extreme effects can be further classified as ray extremal and non-ray extremal effects. A ray $\vec{r}\in V^\star_+$ is called an extreme ray if there do not exist rays $\vec{r}_1,\vec{r}_2\in V^\star_+$ and scalar $\mu$, with $\vec{r}_1\neq\lambda\vec{r}_2$ for $\lambda>0$ and $0<\mu<1$, such that $\vec{r}=\mu\vec{r}_1+(1-\mu)\vec{r}_2$. The formulation generally assumes the `no-restriction hypothesis' which specifies the set of possible measurements to be the dual of the set of states \cite{Chiribella11}. 

{\it Transformation:} The set $\mathcal{T}_S$ denotes the collection of transformations that map states to states, {\it i.e.} $T(V_+) \subset V_+,~\forall~T\in\mathcal{T}_S$. They are also assumed to be linear to preserve statistical mixtures. They cannot increase the total probability but are allowed to decrease it. 

{\it Composite system:} Given two systems $A$ and $B$, the composite state space $\Omega_{AB}$ is embedded in the tensor product space $V_A\otimes V_B$. Although the choice of $\Omega_{AB}$ is not unique, the no signaling principle and tomographic locality postulate \cite{Hardy01} bound the choices within two extremes -- the minimal tensor product space ($\otimes_{\min}$) and maximal tensor product space ($\otimes_{\max}$) \cite{Namioka69}. More formally,
\begin{align*}
\Omega^{\min}_{AB}&\equiv\left\{\omega_{AB}=\sum p_i\omega_A^i\otimes\omega_B^i,~s.t.\right.\\
&\left.\omega_A^i\in\Omega_A,~\omega_B^i\in\Omega_B,~\&~p_i\ge0,~\sum p_i=1\right\},\\
\Omega_{AB}^{\max}&\equiv\left\{\omega_{AB},~s.t.~u_A\otimes u_B(\omega_{AB})=1,\right.\\
&\left.~e_A\otimes e_B(\omega_{AB})\ge0,~\forall~e_A\in\mathcal{E}_A~\&~e_B\in\mathcal{E}_B\right\}.
\end{align*}
In this work, we are interested in a particular class of GPT systems known as polygon models, which we recall in the following subsection. 

\subsection{Polygon models}
This class of models was first proposed in \cite{Janotta11} to understand the limited nonlocal behavior of a theory from its local state space structure. Although these are just hypothetical toy systems, in the recent past, several interesting studies have been reported on them, which subsequently bring several nontrivial foundational insights towards the structure of Hilbert space quantum theory \cite{Muller12,Massar14,Safi15,Saha20,Bhattacharya20,Saha20(1),Patra22}. Such a system we will denote as $\mathcal{P}_{ly}(n)\equiv\left(\Omega_n,\mathcal{E}_n,\mathcal{T}_n\right)$, where $n$ takes values from positive integers. 

{\it State spaces}: For every elementary system, the state space $\Omega_{n}$ is a regular polygon with $n$ sides. The state space can also be thought of as a convex hull of $n$ pure states $\{\omega_i\}_{i=1}^{n}\subset\mathbb{R}^3$, that is given by
\begin{align}\label{state}
\omega_i=\begin{pmatrix}
r_n \cos \frac{2 \pi i}{n}\\
r_n \sin \frac{2 \pi i}{n}\\
1\end{pmatrix},~~ \mbox{with}~~r_n=\sqrt{\sec(\pi/n)}.
\end{align}
For $n=1,2,3$ the state spaces are simplices, whereas interesting scenarios arise for $n\ge4$. For any such cases, the state space allows some mixed states to have non-unique decompositions in terms of extreme points. For the circular state space, the form of the pure states read as, $\omega_{\theta}=(\cos\theta,\sin\theta,1)^{\mathrm{T}}$ with $\theta$ varying continuously from $0$ to $2\pi$. 

{\it Effect spaces:} The effects are constructed in such a way that the outcome probabilities, defined as $e_i(\omega_j):=e_i^{\mathrm{T}}\omega_j$, come out to be a valid one, {\it i.e.}, $0\leq e_i(\omega_j)\leq 1,~\forall\omega_j\in\Omega_{n},e_i\in\mathcal{E}_n$. Among all the extreme points of the convex effect space $\mathcal{E}_n$, the null effect $\Theta$ and the unit effect $u$ are designated uniquely for all the models, and are given by,
\begin{align}
\Theta:=(0,0,0)^{\mathrm{T}},~~u:=(0,0,1)^{\mathrm{T}}. 
\end{align}
The other extremal effects $\{e_i\}_{i=1}^{n}$ and their complementary ones $\{\bar{e}_i:=u-e_i\}_{i=1}^{n}$ are given by:
\begin{subequations}\label{effect}
\begin{align}
e_i&:=\frac{1}{2}\begin{pmatrix}
r_n \cos \frac{(2i-1) \pi}{n}\\
r_n \sin\frac{(2i-1) \pi}{n}\\
1\end{pmatrix}~~\mbox{(for even n)},\\
e_i&:=\frac{1}{1+r_n^2}\begin{pmatrix}
r_n \cos \frac{2i\pi}{n}\\
r_n \sin\frac{2i\pi}{n}\\
1\end{pmatrix} \mbox{(for odd n)}. 
\end{align}    
\end{subequations}
For the even-gons, the complementary effects turn out to be extremal effects that are also ray extremal effects. 
However, for odd-gons, although $e_i$'s are ray extremal effects, the $\bar{e}_i$'s are not ray extremal. For circular state space, the extreme effects read as $e_i:=1/2(\cos\theta_i,\sin\theta_i,1)^{\mathrm{T}}$, with $\theta_i\in[0,2\pi]$.  

With the complete characterization of the state and effect spaces, a natural question emerges regarding the discrimination of two or more different preparations by performing the measurements allowed in that particular theory. A set of states $\Omega_{\text{PD}}:=\{\omega_1,\omega_2,\cdots,\omega_n\}$ is said to be perfectly distinguishable if there exists a measurement $\mathcal{M}:=\{e_i|\sum_{i=1}^n e_i=u\}$ such that $e_i(\omega_j)=\delta_{ij}$. The maximum cardinality of the set $\Omega_{\text{PD}}$ \textcolor{blue}{is} generally referred to as the \textit{Operational Dimension} (OD) of that particular GPT. For a single qubit as well as for all the GPT models with regular polygonal state-space description, the OD equals exactly $2$.

{\it Transformations:} For any $n$-gon, the set of reversible transformations, contains $n$ reflections and $n$ rotations. For a given $n$, the elements $\mathcal{T}^s_k\in\mathbb{T}_n$ are given by,
\begin{align}
\mathcal{T}_k^s:=\begin{pmatrix}
\cos\frac{2\pi k}{n} & -s\sin\frac{2\pi k}{n} & 0 \\
\sin\frac{2\pi k}{n} & s\cos\frac{2\pi k}{n} & 0 \\
0 & 0 & 1   
\end{pmatrix},
\end{align}
where $k\in\{1,2,\cdots,n\}$ and $s=\pm1$, with $s=+$ corresponding to the rotations and $s=-$ representing the reflections. The composition of such models will be discussed in the next section. In the following subsection, we recall another prerequisite concept, Hardy's nonlocality argument. 

\subsection{Hardy's nonlocality test}
Hardy's argument can be considered as one of the simplest tests of Bell nonlocality for two spacelike separated parties, Alice and Bob, sharing parts of a composite system. In each run, Alice and Bob perform  either of the dichotomic measurements $\mathcal{M}_x$, $\mathcal{M}_y$ or $\mathcal{N}_x$, $\mathcal{N}_y$ on their part of the composite system respectively, where $x,y\in\{1,2\}$. Their local outcomes are denoted as $a$ and $b$ respectively with $a,b\in\{\pm 1\}$. Upon performing the experiment many times they will obtain a joint input-output probability distribution, also called behaviour, $P\equiv\{p(a,b|\mathcal{M}_x^A,\mathcal{N}_y^B)~|~x,y\in\{1,2\};~a,b=\pm1\}$. A behavior is termed local, if it can be expressed in factorized form, {\it i.e.}, $p(a,b|\mathcal{M}_x^A,\mathcal{N}_y^B)=\int_\Lambda d\lambda p(\lambda)\xi(a|\mathcal{M}_x^A,\lambda)\xi(b|\mathcal{N}_y^B,\lambda)$, where $p(\lambda)$ is a probability distribution over a set of local hidden-variables $\Lambda$, and $\xi(a|\mathcal{M}_x^A,\lambda),~\xi(b|\mathcal{N}_y^B,\lambda)$ are local response functions of Alice and Bob respectively. As shown by Hardy, a correlation satisfying the following four conditions
\begin{subequations}
\begin{align}
\label{hardysucc}
p\left(+1,+1|\mathcal{M}_1^A,\mathcal{N}_1^B\right)&>0,\\
P(+1,+1|\mathcal{M}_1^A,\mathcal{N}_2^B)&=0,\\
P(+1,+1|\mathcal{M}_2^A,\mathcal{N}_1^B)&=0,\\
P(-1,-1|\mathcal{M}_2^A,\mathcal{N}_2^B)&=0,
\end{align}
\end{subequations}
must be nonlocal. The quantity on the left hand side of Eq.(\ref{hardysucc}) is known as the success probability ($\mathrm{P}_H$) of Hardy's argument, {\it i.e.} $\mathrm{P}_H:=p\left(+1,+1|\mathcal{M}_1^A,\mathcal{N}_2^B\right)$. In quantum mechanics the maximum value of Hardy's success probability has been shown to be $(-11 + 5\sqrt{5})/2\approx 0.09$ \cite{Seshadreesan11,Rabelo12}. Apart from certifying nonlocality, Hardy's test has recently been shown to have several applications, such as post quantumness witness \cite{Das13(1),Das13(2),Das13(3)}, randomness certification \cite{Ramanathan18}, and self-testing quantum states \cite{Rai21(1),Rai21(2)}. In this work, we will utilize this to study the correlation strength of bipartite polygon systems.  

\section{Entanglement classes in bipartite polygons}\label{three}
As already mentioned, the bipartite composition for any GPT system should lie in between the minimal and the maximal tensor product compositions. For polygon systems, any such composition must contain all the product states and product effects. The sets of such extremal states and ray extremal effects are given by 
\begin{subequations}
\begin{align}
\mathcal{P}_{st}[n]&\equiv\left\{\omega_{n(i-1)+j}:=\omega_i^A\otimes\omega_j^B\right\}_{i,j=1}^n,\label{extsta}\\
\mathcal{P}_{ef}[n]&\equiv\left\{e_{n(i-1)+j}:=e_i^A\otimes e_j^B\right\}_{i,j=1}^n,\label{rayexteff}
\end{align}
\end{subequations}
where $\omega^A_i,\omega^B_j$ are defined in Eq.(\ref{state}) and $e^A_i,e^B_j$ are defined in Eq.(\ref{effect}). Instead of representing a bipartite state/effect as a vector in $\mathbb{R}^9$, it is sometimes convenient to represent it as a $3\times 3$ matrix, {\it i.e.} $\omega_i\otimes\omega_j\equiv\omega_i\otimes\omega_j^{\mathrm{T}}$. Apart from the aforesaid product or factorized states, a bipartite composition may also allow states that are not factorized, and such states are called entangled. For $n$-gon one such entangled state is identified in \cite{Janotta11}:
\begin{align}
\Phi_J:=\begin{pmatrix}1&0&0\\0&1&0\\0&0&1\end{pmatrix},~~\mbox{for~odd~} n;~~~~~~~~~~~~~~~\label{oddJ}\\
\Phi_J:=\begin{pmatrix}\cos(\pi/n)&\sin(\pi/n)&0\\-\sin(\pi/n)&\cos(\pi/n)&0\\0&0&1\end{pmatrix},~~\mbox{for~even~} n.\label{evenJ}
\end{align}
As pointed out in \cite{Janotta11}, the state $\Phi_J$ can be considered as a natural analogue of quantum mechanical maximally entangled state. Here, we use the sub-index following the initial of the first author of Ref.\cite{Janotta11}. Note that any such entangled state must yield consistent probability on any product effect.   

Like the entangled states, there exist entangled effects as well in a composite theory, which must yield consistent probability on any product state. Whenever a composite system is considered to allow both entangled states and entangled effects, then it must be taken care that all the effects yield consistent probabilities on all the states. For the composition of two square bits, it has been shown that there exist exactly four types of compositions \cite{DallArno17}. Among them, one is the maximal composition that allows all possible factorized and entangled states but admits only factorized effects. One is the other extreme, {\it i.e.} minimal composition that permits only factorized states, but all possible factorized and entangled effects are allowed. The other two compositions lie strictly in between. In this work, we will be interested in the maximal composition of two polygon systems. For such a bipartite composition, the set of allowed reversible transformations is given by  
\begin{align}
\mathbb{T}_{AB}[n]&:=\left\{\mathcal{T}^{s_1}_{k_1}\otimes \mathcal{T}^{s_2}_{k_2},~\mbox{Swap}~|~\mathcal{T}^{s_1}_{k_1}\in\mathbb{T}^A_n,~\mathcal{T}^{s_2}_{k_2}\in\mathbb{T}^B_n\right\}\nonumber,\\
&\mbox{where},~\mbox{Swap}(\omega_i^A\otimes\omega_j^B):=\omega_j^A\otimes\omega_i^B;\nonumber\\
&s_1,s_2\in\{\pm1\};~k_1,k_2\in\{1,2,\cdots,n\}.\label{unitaries} 
\end{align}
Note that the Swap operation, as well as the local reversible transformations, map product states to product ones and entangled to entangled ones. These help us to define equivalent classes of entangled states. 
\begin{definition}\label{defi1}
Two entangled states $\Phi_1$ and $\Phi_2$ of a bipartite polygonal system are said to be equivalent if they are connected by some local reversible transformation.   
\end{definition}
For instance, the maximal composition of two square bits allows $8$ different entangled states \cite{DallArno17}. However, all of them belong to the same class as they are connected with each other by local reversible transformations. However, in the following, we will see that this is not the case for higher gons.

\subsection{Finding extremal entangled states}\label{IIIA}
Any unnormalized state in the maximal composition of the bipartite polygon system can be represented as a vector in $\mathbb{R}^{9}$, which can also be represented as a 3$\times$3 matrix, 
\begin{align}
\Phi\equiv\begin{pmatrix}
a_1 & a_2 & a_3 \\
a_4 & a_5 & a_6 \\
a_7 & a_8 & a_9   
\end{pmatrix},~~~~a_i\in\mathbb{R}.
\end{align}
Positivity of outcome probability demands $\tr[e^{\mathrm{T}}(\Phi)]\geq 0$ for any effect allowed in this theory. Note that in maximal composition only product effects are allowed which forms an effect cone in $\mathbb{R}^9$, with the ray extremal effects $\mathcal{P}_{ef}[n]$ as defined in Eq.(\ref{rayexteff}). Therefore, the required positivity is assured once it is checked that $\tr[e^{T}(\Phi)]\geq 0,~ \forall~e\in\mathcal{P}_{ef}[n]$.  

Normalization of the state is defined with the help of unit effect $u:=u_A\otimes u_B^{\mathrm{T}}$, which demands $\tr[u^{\mathrm{T}}(\Phi)]=1$, and accordingly we have $a_9=1$. Thus a normalized state reads as
\begin{align}
\Phi\equiv\begin{pmatrix}
a_1 & a_2 & a_3 \\
a_4 & a_5 & a_6 \\
a_7 & a_8 & 1   
\end{pmatrix},
\end{align}
and the set of normalized states forms a $8$-dimensional polytope embedded in $\mathbb{R}^9$. 
\begin{figure}[b!]
\centering
\includegraphics[width=0.55\textwidth]{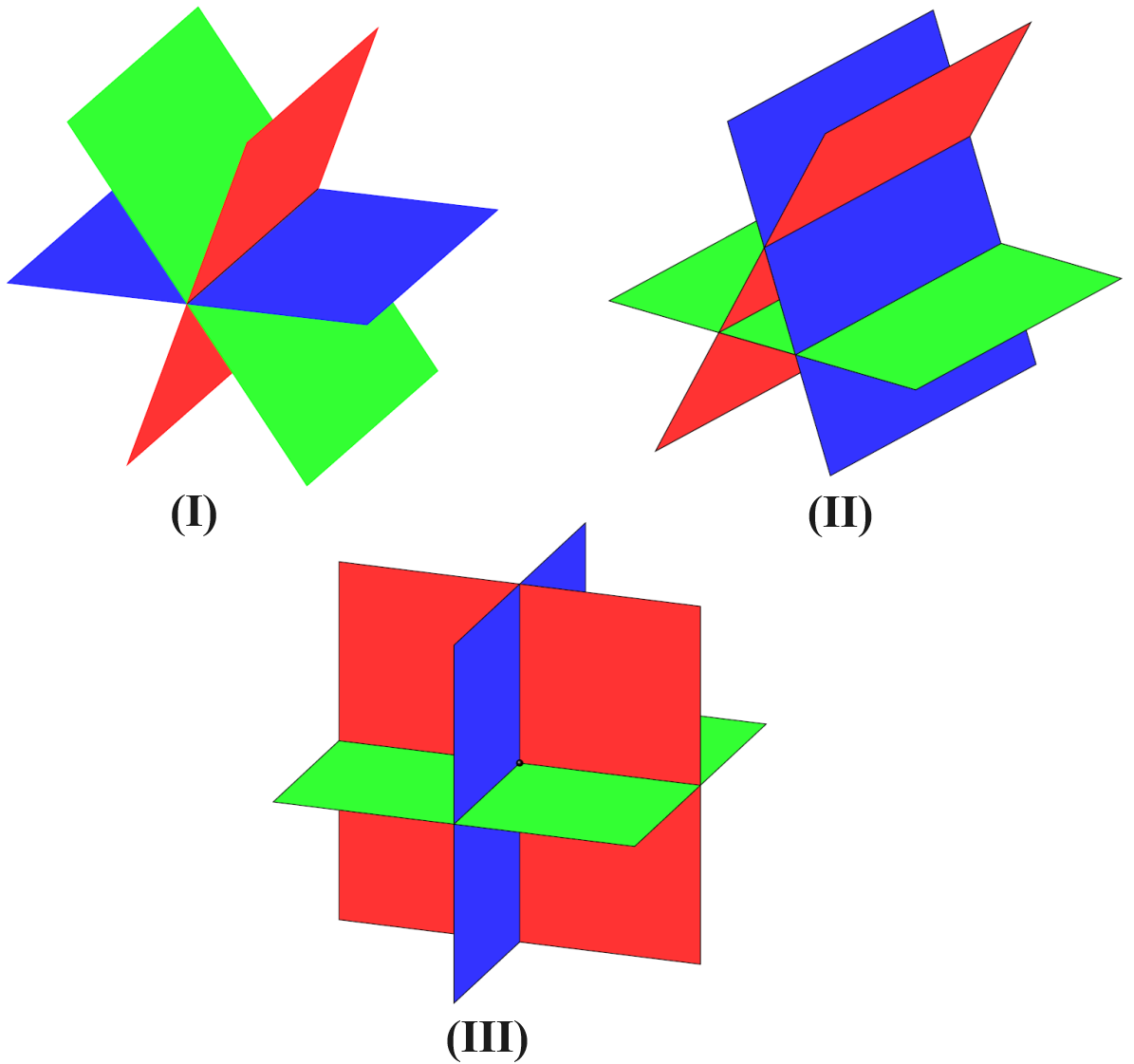}
\caption{Three mutually non-parallel planes in $\mathbb{R}^3$ can intersect each other in three different ways. While in case (I) all three planes intersect in a common line, in case (II) each of the pairs intersect in different lines. On the other hand, in case (III) they intersect in a common unique point, which is of our interest.}\label{fig1}
\end{figure}

To find the extreme points of the normalized state space, first note that the inequality $\tr[e(\Phi)]\ge0$ represents an $8$-dimensional half-space for any normalized state $\Phi$ and for any effect $e\in\mathcal{P}_{ef}[n]$. If we represent this expression as equality, {\it i.e.}, $\tr[e(\Phi)]=0$, then it corresponds to a $7$-dimensional hyper-surface. Now for a different $e^\prime\in\mathcal{P}_{ef}[n]$ the equation $\tr[e^\prime(\Phi)]=0$ corresponds to another $7$-dimensional hyper-surface which is either parallel to the hyper-surface corresponding to the effect $e$ or they intersect each other in a $6$-dimensional hyper-surface. Now for a third effect $e^{\prime\prime}$, if the $7$-dimensional hyper-surface $\tr[e^{\prime\prime}(\Phi)]=0$ is not parallel to the hyper-surfaces corresponding to $e$ and $e^\prime$ then it may intersect them in the same $6$-dimensional hyper-surface (the intersection of $\tr[e(\Phi)]=0$ and $\tr[e^\prime(\Phi)]=0$ hyper-surfaces) or these three intersect each other in a $5$-dimensional hyper-surface (see Fig.\ref{fig1}). By proceeding this way, at some stage, $8$ different hyper-surface corresponding to eight different effects from $\mathcal{P}_{ef}[n]$ will intersect at a single point, which corresponds to a normalized pure state. Mathematically this boils down to checking the uniqueness of the solution for a system of linear equations. The eight different effects from the set $\mathcal{P}_{ef}[n]$ can be chosen in $^{n^2}C_8$ different ways. All of these choices will not lead to a unique solution, but whenever it does, we obtain an extremal bipartite state for the maximal composition of the polygonal systems, provided the positivity constraints are satisfied. See the Appendix for a more detailed discussion.  

Once an extreme state $\Phi$ is identified, it is then straightforward to check whether it belongs to the set $\mathcal{P}_{st}[n]$ or not. If it does not belong to the set $\mathcal{P}_{st}[n]$, it corresponds to an extreme entangled state. Furthermore, the entangled states can be classified (see Definition \ref{defi1}) with the help of local reversible operations chosen from the set $\mathcal{T}_{AB}[n]$.    
\subsection{Bipartite pentagon system} 
Following the above procedure in MATLAB, for bipartite composition of pentagon systems we obtain $135$ different extreme states $\Phi_k$; $k\in\{1,2,\cdots,135\}$. Among these we have $25$ (say, number $1$ to $25$) factorized extreme states $\Phi_{5(i-1)+j} = \omega_i\otimes\omega_j^{\mathrm{T}}$, where $i,j \in \{1,\cdots,5\}$ and $\omega_i$'s are given in Eq.(\ref{state}). The remaining $110$ states (number $26$ to $135$) are entangled. Furthermore, applying the local reversible transformation we find that these states can be divided into two classes -- (i) the first class contains $10$ states (say, number $26$ to $35$), and (ii) the second one contains $100$ states (number $36$ to $135$). One representation state of the first class is the state $\Phi_J$ of Eq.(\ref{oddJ}) and hence we call this {\it Janotta} class. One of the representation states for the second class is given by 
\begin{align}
\Phi_H=\begin{pmatrix}
-\cos(\pi/5) & \frac{-r_{5}^{6} \sin(\pi/5)}{8\left(1 + r_{5}^{2}\right)} & 0 \\
\frac{-r_{5}^{6} \sin(\pi/5)}{8\left(1 + r_{5}^{2}\right)} & \cos(\pi/5) & \frac{-r_{5}^{3}}{4 \sin(\pi/5)}\\
0 & \frac{-r_{5}^{3}}{4 \sin(\pi/5)} & 1 \\
\end{pmatrix}.\label{pentaH}
\end{align}   
This class we call the {\it Hardy} class, so the sub-index `H'. The justification of this nomenclature will become obvious in the next section. The other states in {\it Janotta} class and {\it Hardy} class can be obtained from the representative states $\Phi_J$ and $\Phi_H$ respectively by applying local reversible transformations.

The two classes of states $\Phi_J$ and $\Phi_H$ have structural distinctions. For instance the state in Eq.(\ref{oddJ}) we have $a_3=a_6=a_7=a_8=0$. It can be shown that all the states in this class (obtained through local reversible transformations) have the same feature, which is not the case for the states in the class of $\Phi_H$.
\begin{figure}[t]
\centering
\includegraphics[width=0.65\textwidth]{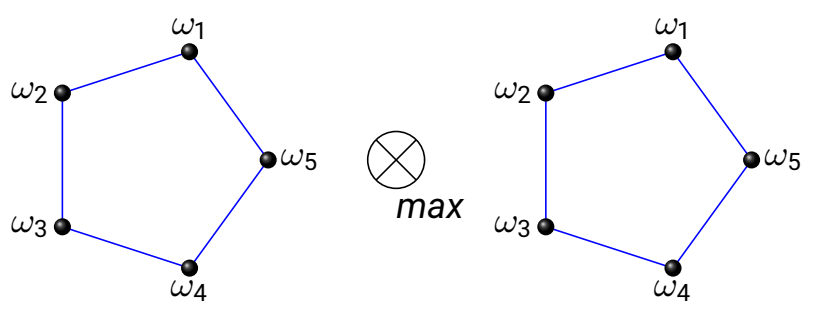}
\caption{Maximal tensor product of two elementary pentagon systems allows $25$ product states. On the other hand, it allows two different classes (not equivalent under local reversible transformation) of entangled states: Janotta class states with $\Phi_J$ of Eq.(\ref{oddJ}) being a representative state and the Hardy class states with $\Phi_H$ of Eq.(\ref{pentaH}). While $\Phi_J$ can be thought of as a natural analog of the maximally entangled state of a two-qubit and does not exhibit Hardy's nonlocality, the $\Phi_H$ state shows Hardy and importantly with the success probability strictly greater than quantum success. However, the resulting correlation belongs to the set of {\it almost quantum set } $Q_1$.}\label{fig2}
\end{figure}
\begin{figure}[b]
\centering
\includegraphics[width=0.65\textwidth]{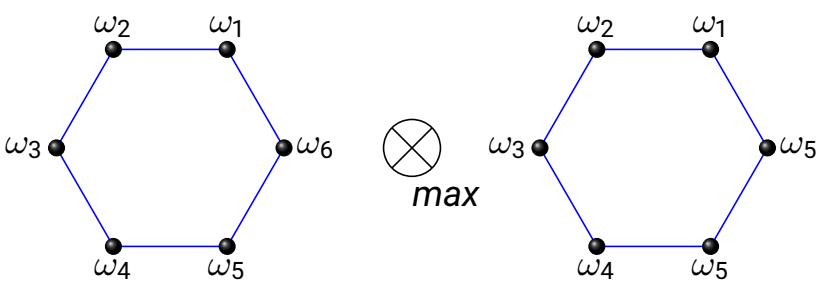}
\caption{Maximal tensor product of two elementary hexagon systems allows $36$ product state. On the other hand, it allows six different classes of entangled states. Representative states for each of the classes are given in Eq.(\ref{hexa}). The state $\Phi_I$ although can be thought of as an analog of the maximally entangled state, unlike the two-qubit maximally entangled state, exhibits Hardy's nonlocality behaviour.}\label{fig3}
\end{figure}

\subsection{Bipartite hexagon system}
The task of characterizing all the entangled states becomes computationally costly with higher gons. This is because the choices of eight different effects rapidly increase with the number of sides in the component polygons. However, we obtain a complete characterization of the entanglement states for the bipartite hexagon. It turns out that there are six different entangled classes of states possible there. Representation states for each of these classes are given below:
\begin{align}\label{hexa}
\Phi_I&=\begin{pmatrix}
\frac{\sqrt{3}}{2} & \frac{1}{2} & 0\\
\frac{1}{2} & -\frac{\sqrt{3}}{2} & 0\\
0 & 0 & 1
\end{pmatrix},~~~
\Phi_{II}=\begin{pmatrix}
\frac{1}{15r_6^2} & \frac{7}{10} & \frac{2}{3r_6^3}\\
\frac{7}{10} & \frac{1}{5r_6^2} & \frac{3}{5r_6}\\
\frac{2}{3r_6^3} & \frac{3}{5r_6} & 1
\end{pmatrix},\nonumber\\
\Phi_{III}&=\begin{pmatrix}
\frac{1}{3r_6^2} & \frac{1}{r_6^4} & \frac{2}{3r_6^3}\\
\frac{1}{r_6^4} & 0 & \frac{1}{2r_6}\\
\frac{2}{3r_6^3} & \frac{1}{2r_6} & 1
\end{pmatrix},~
\Phi_{IV}=\begin{pmatrix}
\frac{1}{7r_6^2} & \frac{9}{14} & \frac{10}{21r_6^3}\\
\frac{11}{14} & \frac{1}{7r_6^2} & \frac{5}{7r_6}\\
\frac{6}{7r_6^3} & \frac{3}{7r_6} & 1
\end{pmatrix},\\
\Phi_{V}&=\begin{pmatrix}
\frac{1}{7r_6^2} & \frac{11}{14} & \frac{6}{7r_6^3}\\
\frac{9}{14} & \frac{1}{7r_6^2} & \frac{3}{7r_6}\\
\frac{10}{21r_6^3} & \frac{5}{7r_6} & 1
\end{pmatrix},~
\Phi_{VI}=\begin{pmatrix}
\frac{-1}{2r_6^2} & \frac{-1}{r_6^4} & \frac{1}{3r_6^3}\\
\frac{-1}{r_6^4} & \frac{1}{2r_6^2} & \frac{-1}{2r_6}\\
\frac{1}{3r_6^3} & \frac{-1}{2r_6} & 1
\end{pmatrix}.\nonumber
\end{align}
The state $\Phi_I$ is equivalent to the maximally entangled state $\Phi_J$ of Eq. (\ref{evenJ}) as identified by Janotta {\it et al.} Important to note that the states $\Phi_{I},\Phi_{II},~\Phi_{III}$ and $\Phi_{VI}$ are the symmetric representatives of their corresponding class, while we cannot find any symmetric state in class $\Phi_{IV}$ and $\Phi_{V}$. Recall that, a joint state $\Phi^{AB}$ is called symmetric if $(e\otimes f)(\Phi^{AB})=(f\otimes e)(\Phi^{AB}),~\forall~e,f\in V^\star_+$, and in matrix representation $\Phi^{AB}$ is symmetric if and only if the corresponding matrix is symmetric \cite{Janotta11}. However, we can notice that the states  $\Phi_{IV}$ and $\Phi_{V}$ are transposes of each other meaning that even though they are not connected by local reversible dynamics they are related by a swap operation. Thus the entanglement content in these two states is the same.

\section{Nonlocal properties of the entangled states}\label{four}
Nonlocal properties of the correlations obtained from the maximally entangled states $\Phi_J$ have been studied in Ref. \cite{Janotta11}. In particular, the maximal CHSH inequality violation has been explored for even and odd gons. Importantly, the correlations of even $n$ systems can always reach or exceed Tsirelson’s bound, while the correlations of odd $n$ systems are always below Tsirelson’s bound.  For the odd $n$ systems the maximally entangled state belongs to the class {\it inner product states}\footnote{A state $\Phi^{AB}$ is called an inner product state if $\Phi^{AB}$ is symmetric, and positive
semi-definite, {\it i.e.} $(e\otimes e)(\Phi^{AB})\ge0,~\forall~e\in V^\star$ \cite{Janotta11}.} and all correlations obtainable from measurements on inner product states satisfy Tsirelson’s bound. Here we analyze the nonlocal properties of different classes of entangled states from the perspective of Hardy's nonlocality argument.  

As already mentioned we need two dichotomic measurements both on Alice's and Bob's part to construct Hardy's non-locality argument. Furthermore, Alice (also Bob) must choose incompatible measurements on her (his) subsystem \cite{Fine82(1),Fine82(2)}. Let Alice's first measurement is $\mathcal{M}_1^A=\left\{e_i^A,\bar{e}_i^A\right\}\equiv\left\{E^{A+}_1,E^{A-}_1\right\}$. We use the notation $\mathcal{M}_1^A=\left\{E^{A+}_1,E^{A-}_1\right\}$ to give the freedom that positive outcome can be assigned to $e^A_i$ or $\bar{e}^A_i$. Let the other measurement of Alice be $\mathcal{M}_2^A=\left\{e_j^A,\bar{e}_j^A\right\}\equiv\left\{E^{A+}_2,E^{A-}_2\right\}$, with $j\neq i$. Similarly, for Bob let us consider two measurements $\mathcal{N}_1^B=\left\{e_k^B,\bar{e}_k^B\right\}\equiv\left\{E^{B+}_1,E^{B-}_1\right\}$ and $\mathcal{N}_2^B=\left\{e_l^B,\bar{e}_l^B\right\}\equiv\left\{E^{B+}_2,E^{B-}_2\right\}$, with $l\neq k$. With these measurement choices, up to local relabeling, Hardy's nonlocality argument can be written as
\begin{subequations}
\label{h}
\begin{align}
P(E^{A+}_1,E^{B+}_1|\mathcal{M}_1^A\mathcal{N}_1^B)>0\label{h1},\\
P(E^{A+}_1,E^{B+}_2|\mathcal{M}_1^A\mathcal{N}_2^B)=0\label{h2},\\
P(E^{A+}_2,E^{B+}_1|\mathcal{M}_2^A\mathcal{N}_1^B)=0\label{h3},\\
P(E^{A-}_2,E^{B-}_2|\mathcal{M}_2^A\mathcal{N}_2^B)=0\label{h4}.
\end{align}
\end{subequations}

\subsection{Hardy's nonlocality for maximally entangled states}
In this subsection, we will analyze Hardy's nonlocality behaviour of the correlations obtained from the maximally entangled states of bipartite polygon theories. We prove two generic results. In the following, we first prove a no-go result.  
\begin{figure}[b!]
\centering
\includegraphics[width=0.45\textwidth]{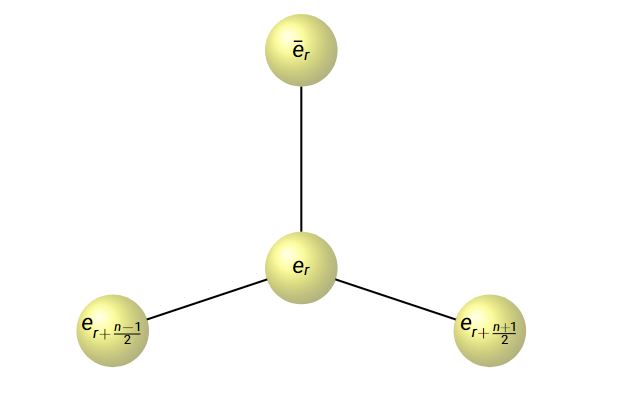}
\caption{(Color online) Orthogonality graph of extreme effects for odd-gon theories. Each node denotes an extreme effect. Two effects $e$ and $f$ are connected with each other by an edge if and only if they are orthogonal to each other in the sense that $e\cdot f=0$; here, the inner product is standard $\mathbb{R}^3$ inner product. Operationally, for a pair of such effects there always exist at least one pair of states $\omega_e$ and $\omega_f$, such that $e\cdot\omega_e=1=f\cdot\omega_f$ and $e\cdot\omega_f=0=f\cdot\omega_e$. While calculating the sub-indices of the effects modulo $n$ operation is assumed throughout. Here, in order to be consistent with our notation we define `$r~\mbox{mod}~n$' in such a way that it returns the remainder if the remainder is nonzero, otherwise it returns $n$.}\label{ortho}
\end{figure}


\begin{theorem}\label{the1}
The maximally entangled states $\Phi_J$ of the bipartite regular polygons with odd $n$ do not exhibit Hardy's nonlocality argument.
\end{theorem}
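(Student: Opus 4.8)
The plan is to exploit the special form of the odd-gon Janotta state. For odd $n$ the state $\Phi_J$ of Eq.(\ref{oddJ}) is the $3\times3$ identity, so every joint probability collapses to a bare Euclidean inner product: writing a product effect as the outer product $f\otimes g$, one has $\Tr[(f\otimes g)^{\mathrm{T}}\Phi_J]=f\cdot g$, the standard inner product on $\mathbb{R}^3$. Under this reduction the three vanishing conditions (\ref{h2})--(\ref{h4}) become orthogonality requirements on the chosen effect vectors, while the success condition (\ref{h1}) demands a strictly positive inner product. First I would fix the measurement data exactly as in the text — $\mathcal{M}_1^A=\{e_i,\bar e_i\}$, $\mathcal{M}_2^A=\{e_j,\bar e_j\}$ with $j\neq i$, and likewise $\mathcal{M}_1^B=\{e_k,\bar e_k\}$, $\mathcal{M}_2^B=\{e_l,\bar e_l\}$ with $l\neq k$ — while keeping free the labelling of which effect of each pair carries the ``$+$'' outcome.

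Next I would catalogue the orthogonality relations among the vectors $\{e_a,\bar e_a\}$ directly from the odd-gon formulas. A short computation gives $e_i\cdot e_m=\frac{1}{(1+r_n^2)^2}\big(r_n^2\cos\tfrac{2\pi(i-m)}{n}+1\big)$, which vanishes precisely when $m-i\equiv\pm\tfrac{n-1}{2}\ (\mathrm{mod}\ n)$ — an integer shift because $n$ is odd — recovering exactly the adjacency of the orthogonality graph in Fig.\ref{ortho}. Likewise $e_i\cdot\bar e_m=\frac{r_n^2}{(1+r_n^2)^2}\big(1-\cos\tfrac{2\pi(i-m)}{n}\big)$ vanishes iff $i=m$. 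The decisive relation is $\bar e_i\cdot\bar e_m=\frac{r_n^2}{(1+r_n^2)^2}\big(\cos\tfrac{2\pi(i-m)}{n}+r_n^2\big)$, which is strictly positive for every pair, since $r_n^2=\sec(\pi/n)>1$ keeps the bracket above zero. Thus two complementary effects are never orthogonal; this is the structural fact, peculiar to the odd case, that will obstruct Hardy's argument.

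With these relations I would run a finite analysis over the labelling of the two second measurements, whose ``$-$'' effects enter (\ref{h4}) and whose ``$+$'' effects enter (\ref{h2})--(\ref{h3}). If both minus-effects are complementary, $(\,E^{A-}_2,E^{B-}_2)=(\bar e_j,\bar e_l)$, then (\ref{h4}) reads $\bar e_j\cdot\bar e_l=0$, impossible. If both are of $e$-type, then (\ref{h4}) forces $l-j\equiv\pm\tfrac{n-1}{2}$, while (\ref{h3}) and (\ref{h2}) (through the relation $e\cdot\bar e=0\iff$ equal index) pin $E^{B+}_1=e_j$ and $E^{A+}_1=e_l$, so the success probability (\ref{h1}) becomes $e_l\cdot e_j=0$ — a contradiction. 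The two mixed labellings collapse the same way: (\ref{h4}) forces $j=l$, after which (\ref{h1}) reduces to either $\bar e_j\cdot e_j=0$ or $e_i\cdot e_j=0$ with $i-j\equiv\pm\tfrac{n-1}{2}$. In every branch (\ref{h1}) is driven to zero, contradicting $\mathrm{P}_H>0$, which proves the no-go.

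I expect the main obstacle to be organising the case analysis so that its completeness is transparent: one must check that every admissible assignment of ``$\pm$'' to the four measurement pairs has been covered, and that the incompatibility constraints $i\neq j$, $k\neq l$ are invoked only where legitimate. The computational content is light — the three inner-product identities above, read in tandem with Fig.\ref{ortho} to list the $e$-$e$ orthogonal pairs — but this is where a bookkeeping slip could hide. The genuine conceptual crux is the single sign fact $\bar e_i\cdot\bar e_m>0$, which has no counterpart for even gons, where $\Phi_J\neq I$ and the probabilities no longer reduce to inner products; this is precisely what will allow Hardy's nonlocality to survive in the even case treated next.
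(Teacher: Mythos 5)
Your proposal is correct and follows essentially the same route as the paper's proof: reduce all joint probabilities on $\Phi_J$ to Euclidean inner products, invoke the odd-gon orthogonality structure (in particular that $\bar e_i\cdot\bar e_m>0$ always and $\bar e_i\perp e_m$ iff $i=m$, which is exactly the content of the paper's orthogonality graph in Fig.~\ref{ortho}), and exhaust the finitely many labellings to force the success probability to zero. The only differences are cosmetic — you organise the cases by the types of $(E^{A-}_2,E^{B-}_2)$ rather than of $E^{B+}_2$, and you make explicit the inner-product computations that the paper reads off the graph.
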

\begin{proof}
The outcome probability for any pair of effect $E$ on Alice's side and $F$ on Bob's side for the maximally entangled state $\Phi_J$ of Eq.(\ref{oddJ}) read as
\begin{align}
P_{\Phi_J}(E,F)&=\tr\left[\left(E\otimes F^\mathrm{T}\right)\Phi_J^\mathrm{T}\right] \nonumber\\
&=\tr\left[E\otimes F^\mathrm{T}\right]=E\cdot F,
\end{align}
where $E\cdot F$ be the usual inner product in $\mathbb{R}^3$. With this, the Hardy conditions of Eq.(\ref{h}) become
\begin{subequations}
\label{ho}
\begin{align}
P(E^{A+}_1,E^{B+}_1|\mathcal{M}_1^A\mathcal{N}_1^B)=E^{A+}_1\cdot E^{B+}_1>0\label{ho1},\\
P(E^{A+}_1,E^{B+}_2|\mathcal{M}_1^A\mathcal{N}_2^B)=E^{A+}_1\cdot E^{B+}_2=0\label{ho2},\\
P(E^{A+}_2,E^{B+}_1|\mathcal{M}_2^A\mathcal{N}_1^B)=E^{A+}_2\cdot E^{B+}_1=0\label{ho3},\\
P(E^{A-}_2,E^{B-}_2|\mathcal{M}_2^A\mathcal{N}_2^B)=E^{A-}_2\cdot E^{B-}_2=0\label{ho4}.
\end{align}
\end{subequations}
To see whether the aforesaid conditions can be satisfied in any odd-gon, it is handy to have a look at the orthogonality graph for the extreme effects of the odd-gon theory. Two effects $e$ and $f$ will be called orthogonal to each other if and only if $e\cdot f=0$. In any odd-gon theory, it turns out that a ray extremal effect $e_r$ is orthogonal to only two other ray extremely effects $e_{s^\pm}$, with $s^\pm:=\left(r+\frac{n\pm1}{2}\right)~\mbox{mod}~n$; and to the (non-ray) extremal effect $\bar{e}_r$. The orthogonality graph is shown in Fig. \ref{ortho}. The proof of the theorem follows by analyzing the following two cases. 

{\it Case-I:} Let us assume that the effect $E^{B+}_2$ in Eq.(\ref{ho2}) corresponds to some ray extremal effect (say) $e_r$. This implies $E^{B-}_2=\bar{e}_r$, since $\{E^{B+}_2,E^{B-}_2\}$ correspond to a measurement. Eq.(\ref{ho4}) further implies that $E^{A-}_2$ must be orthogonal to $E^{B-}_2$.
Since the only extreme effect orthogonal to $\bar{e}_r$ is $e_r$ (see Fig. \ref{ortho}), therefore we must have $E^{A-}_2=e_r$, which further implies $E^{A+}_2=\bar{e}_r$. Again Eq.(\ref{ho3}) implies $E^{B+}_1=e_r$ and hence $E^{A+}_1\cdot E^{B+}_1=E^{A+}_1\cdot E^{B+}_2=0$, implying zero Hardy success.\\

{\it Case-II:} Here we start by assuming that the effect $E^{B+}_2$ in Eq.(\ref{ho2}) corresponds to some non ray extremal effect (say) $\bar{e}_r$. This implies $E^{B-}_2=e_r$, since $\{E^{B+}_2,E^{B-}_2\}$ correspond to a measurement. This also implies $E^{A+}_1=e_r$ from Eq.(\ref{ho2}). Now from Eq.(\ref{ho4}) we know that $E^{A-}_2$ is orthogonal to $E^{B-}_2$. Which implies either $E^{A-}_2=e_{s^\pm}$ $\left(\mbox{with}~ s^\pm:=\left(r+\frac{n\pm1}{2}\right)~\mbox{mod}~n\right)$ or $E^{A-}_2=\bar{e}_r$. If $E^{A-}_2=e_{s^\pm}$ we have  $E^{A+}_2=\bar{e}_{s^\pm}$ since $\{E^{A+}_2,E^{A-}_2\}$ forms a measurement. Which from Eq.(\ref{ho3}) further implies $E^{B+}_1=e_{s^\pm}$, yielding $E^{A+}_1\cdot E^{B+}_1=e_r\cdot e_{s^\pm}=0$. Similarly if $E^{A-}_2=\bar{e}_r$ we would get $E^{A+}_2=e_r$ and $E^{A+}_1\cdot E^{B+}_1=E^{A+}_2\cdot E^{B+}_1=0$. This proves that even for {\it Case-II} we have zero Hardy success probability, and hence this completes the proof.
\end{proof}
While maximally entangled states of bipartite odd gons do not exhibit Hardy's nonlocality, maximally entangled states of bipartite square bit do exhibit such nonlocality. The PR box correlation resulting from the maximally entangled state of the bipartite square bit exhibits Hardy's nonlocality argument with success probability $1/2$, which is, in fact, the maximum Hardy's success among any no-signaling correlations. In the following, we prove a generic result that the maximally entangled state of any bipartite even gon depicts Hardy's type of nonlocality, albeit with decreasing success probability.

At this point, it is worth mentioning that the maximally entangled state of the quantum two-qubit system fails to exhibit Hardy's nonlocality behaviour \cite{Goldstein94}. In this sense, odd gons are closer to quantum than even gons as the maximally entangled states of the former do not depict Hardy nonlocality while the latter do.
\begin{theorem}\label{theo2}
The maximally entangled state $\Phi_J$ of bipartite even-gons (with $n\ge 4$) exhibits Hardy's non-locality argument with the success probability given by $\sin^2\frac{\pi}{n}$. 
\end{theorem}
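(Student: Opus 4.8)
The plan is to reduce all four Hardy conditions to statements about the real inner-product structure of the effects, exactly as was done for the odd case in Theorem~\ref{the1}, and then exhibit one explicit admissible choice of measurements that realises the claimed success probability.

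First I would record the analogue of the probability rule used in Theorem~\ref{the1}. Since for even $n$ the matrix $\Phi_J$ of Eq.~\eqref{evenJ} is a planar rotation by $\pi/n$ acting trivially on the third coordinate, the joint probability factorises as
\begin{align}
P_{\Phi_J}(E,F)=\tr\left[(E\otimes F^{\mathrm T})\Phi_J^{\mathrm T}\right]=E^{\mathrm T}\Phi_J F .
\end{align}
Applied to the ray-extremal effects $e_i,e_k$, whose angular parts are $(2i-1)\pi/n$ and $(2k-1)\pi/n$, the rotation shifts the relative angle by $\pi/n$, giving
\begin{align}
P_{\Phi_J}(e_i,e_k)=\frac14\left[1+r_n^2\cos\frac{\left(2(i-k)+1\right)\pi}{n}\right],
\end{align}
with $r_n^2=\sec(\pi/n)$. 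I would then note the companion identities coming from $P_{\Phi_J}(u,e_k)=P_{\Phi_J}(e_i,u)=\tfrac12$ and $P_{\Phi_J}(u,u)=1$, namely $P_{\Phi_J}(\bar e_i,\bar e_k)=P_{\Phi_J}(e_i,e_k)$ and $P_{\Phi_J}(\bar e_i,e_k)=\tfrac12-P_{\Phi_J}(e_i,e_k)$, so that every joint probability is governed by the single index difference $i-k$.

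Second, I would pin down the zero set: the displayed formula forces $P_{\Phi_J}(e_i,e_k)=0$ iff $\cos\frac{(2(i-k)+1)\pi}{n}=-\cos(\pi/n)$, whose solutions modulo $n$ are exactly $i-k\in\{\tfrac n2-1,\tfrac n2\}$ (here $n$ even is essential). This is the even-gon counterpart of the orthogonality graph of Theorem~\ref{the1}. The construction then follows by a bookkeeping of index differences: fixing any index $m$ and working modulo $n$, assign the positive outcomes
\begin{align}
E_1^{A+}=e_m,\quad E_2^{A+}=e_{m-1},\quad E_1^{B+}=e_{m-\frac n2-1},\quad E_2^{B+}=e_{m-\frac n2},
\end{align}
with the measurements $\{e_\bullet,\bar e_\bullet\}$. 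Then \eqref{h2} and \eqref{h3} use index difference $\tfrac n2$, while \eqref{h4}, which by the companion identity equals $P_{\Phi_J}(e_{m-1},e_{m-n/2})$, uses difference $\tfrac n2-1$; all three vanish. The surviving success \eqref{h1} has difference $\tfrac n2+1$, so
\begin{align}
\mathrm{P}_H=P_{\Phi_J}(e_m,e_{m-\frac n2-1})=\frac14\left[1-\frac{\cos(3\pi/n)}{\cos(\pi/n)}\right]=\sin^2(\pi/n),
\end{align}
the last equality from $\cos 3\theta=4\cos^3\theta-3\cos\theta$. As the effects chosen on each side are distinct, the two measurements per party are genuinely incompatible for $n\ge4$, so this is a legitimate Hardy argument, establishing existence of the stated value.

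The step I expect to be the real work is not this construction but, if the phrase ``the success probability'' is meant as the \emph{optimal} Hardy success, arguing that no other assignment can beat $\sin^2(\pi/n)$. This would require a finite case analysis over the labellings of the four positive/negative slots (ray-extremal $e$ versus complementary $\bar e$), organised like the two-case split in the proof of Theorem~\ref{the1}: one uses the zero set together with the companion identities to enumerate which index differences are forced by \eqref{h2}--\eqref{h4}, and then maximises \eqref{h1} over the surviving free difference. The symmetry of the index-difference formula should collapse these cases quickly, but verifying that complementary-effect choices never yield a strictly larger value than the construction above is the main obstacle to a fully rigorous optimality claim.
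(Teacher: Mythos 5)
Your construction is correct and follows essentially the same route as the paper: compute $P_{\Phi_J}(e_i,e_k)$ as a function of the index difference, identify the zero set $i-k\in\{\tfrac n2-1,\tfrac n2\}$ (the sign of your $\pi/n$ shift is the mirror image of the paper's, which is immaterial), and then do index bookkeeping; your explicit assignment and the $\cos 3\theta$ identity correctly yield $\sin^2(\pi/n)$. The one place you stop short is the enumeration you flag at the end, and here you overestimate the difficulty: in an even gon the complementary effects are themselves ray extremal, $\bar e_i=u-e_i=e_{i+n/2}$, so every measurement is of the form $\{e_i,e_{i+n/2}\}$ and every choice of positive slot is some $e_r$. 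Hence there are no separate ``complementary-effect'' cases to worry about; setting $E_1^{A+}=e_r$ without loss of generality, conditions \eqref{h2}--\eqref{h4} together with the measurement structure force $E_1^{B+}$ into exactly four candidates (a sign choice at each of three steps, collapsing to four distinct indices), of which two give zero success and two give $\sin^2(\pi/n)$. This is precisely what the paper's proof does, and it is what upgrades your existence statement to the claim that $\sin^2(\pi/n)$ is the \emph{only} nonzero Hardy success attainable from $\Phi_J$. You should carry out that short four-case check rather than leaving it as an open obstacle.
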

\begin{proof}
For even-gons, all the extreme effects are ray extremal. The outcome probability $P_{\Phi_J}\left(e_i^A,e_j^B\right)$ of Alice's effect $e_i^A$ and Bob's effect $e_j^B$ on the maximally entangled state $\Phi_J$ of Eq.(\ref{evenJ}) reads as
\begin{align*}
P_{\Phi_J}\left(e_i^A,e_j^B\right)&=\Tr\left[\left(e_i\otimes e_j^\mathrm{T}\right)\Phi_J\right]\nonumber\\
&=\frac{1}{4}\left[\sec\left(\frac{\pi}{n}\right)\cos\left(\frac{2(i-j)\pi}{n}-\frac{\pi}{n}\right)\right]+\frac{1}{4}.
\end{align*}
Now $P_{\Phi_J}\left(e_i^A,e_j^B\right)=0$ if and only if 
\begin{align}
\cos\left(\frac{2(i-j)\pi}{n}-\frac{\pi}{n}\right)&=-\cos\left(\frac{\pi}{n}\right)\nonumber\\
i.e.,~ i-j&=\frac{1}{2}+\frac{n\pm 1}{2},\label{i-j}
\end{align}
where in the last expression modulo $n$ addition is implied. Let us consider $E^{A+}_1$ in Eq.(\ref{h}) be some ray extremal effect $e_r$ for some $r\in\{1,2,\cdots,n\}$. Eq.(\ref{i-j}) implies that to satisfy the condition of Eq.(\ref{h2}), we must have $E^{B+}_2=e_{s(\alpha)}$, where $s(\alpha)=r-\frac{(n+1)}{2}+(-1)^\alpha \frac{1}{2}$, with Greek indices taking values from $\{0,1\}$. Since $E^{B+}_2$ and $E^{B-}_2$ forms a measurement, therefore we have $E^{B-}_2=e_{t(\alpha)}$, where $t(\alpha)=r-\frac{1}{2}+(-1)^\alpha\frac{1}{2}$. Again, Eqs.(\ref{i-j}) and (\ref{h4}) imply $E^{A-}_2=e_{v(\alpha,\beta)}$, where $v(\alpha,\beta)=r+[(-1)^\alpha+(-1)^\beta ]\frac{1}{2}+\frac{n}{2}$, and $E^{A+}_2=e_{w(\alpha,\beta)}$, with $w(\alpha,\beta)=r+[(-1)^\alpha+(-1)^\beta ]\frac{1}{2}$. Finally, Eqs.(\ref{i-j}) and (\ref{h3}) yield $E^{B+}_1:=e_{z(\alpha,\beta,\gamma)}=r+[(-1)^\alpha+(-1)^\beta+(-1)^\gamma] \frac{1}{2}-\frac{1}{2}-\frac{n}{2}$. In other words, given the choice $E^{A+}_1=e_r$ the effect $E^{B+}_1$ has only the following four choices  
\begin{subequations}
\begin{align}
E^{B+}_1&= e_{z(\alpha)};~z(\alpha):=r-\frac{(n+1)}{2}+(-1)^\alpha \frac{1}{2},\label{E1}\\
E^{B+}_1&= e_{z(\alpha)};~z(\alpha):=r-\frac{(n+1)}{2}+(-1)^\alpha \frac{3}{2}.\label{E2}
\end{align}
\end{subequations}
For the case of Eq.(\ref{E1}), we have
\begin{align*}
&P_{\Phi_J}\left(E^{A+}_1,E^{B+}_1\right)\\
&=\frac{1}{4}\left[\sec\left(\frac{\pi}{n}\right)\cos\left(\frac{2(r-z(\alpha))\pi}{n}-\frac{\pi}{n}\right)\right]+\frac{1}{4}\\
&=\frac{1}{4}\left[\sec\left(\frac{\pi}{n}\right)\cos\left(\frac{2(\frac{n}{2}+\frac{1}{2} \pm  \frac{1}{2})\pi}{n}-\frac{\pi}{n}\right)\right]+\frac{1}{4}\\
&=\frac{1}{4}\left[\sec\left(\frac{\pi}{n}\right)\cos\left(\pi \pm \frac{\pi}{n}\right)\right]+\frac{1}{4}\\
&=\frac{1}{4}\left[-\sec\left(\frac{\pi}{n}\right)\cos\left( \frac{\pi}{n}\right)\right]+\frac{1}{4}=0.
\end{align*}
Therefore, these particular choices of $E^{B+}_1$ do not exhibit Hardy nonlocality. However, for the choices of Eq.(\ref{E2}) we obtain  
\begin{align*}
&P_{\Phi_J}\left(E^{A+}_1,E^{B+}_1\right)\\
&=\frac{1}{4}\left[\sec\left(\frac{\pi}{n}\right)\cos\left(\frac{2(r-z(\alpha))\pi}{n}-\frac{\pi}{n}\right)\right]+\frac{1}{4}\\
&=\frac{1}{4}\left[\sec\left(\frac{\pi}{n}\right)\cos\left(\frac{2(\frac{n}{2}+\frac{1}{2} \pm  \frac{3}{2})\pi}{n}-\frac{\pi}{n}\right)\right]+\frac{1}{4}\\
&=\frac{1}{4}\left[\sec\left(\frac{\pi}{n}\right)\cos\left(\pi \pm \frac{3\pi}{n}\right)\right]+\frac{1}{4}\\
&=\frac{1}{4}\left[-\sec\left(\frac{\pi}{n}\right)\cos\left( \frac{3\pi}{n}\right)\right]+\frac{1}{4}=\sin^2\left(\frac{\pi}{n}\right).
\end{align*}
This completes the proof of the theorem. The variation of Hardy's success probability for different even gons is shown in Fig. \ref{fig3}. 
\end{proof}
\begin{figure}[t!]
\centering
\includegraphics[width=0.5\textwidth]{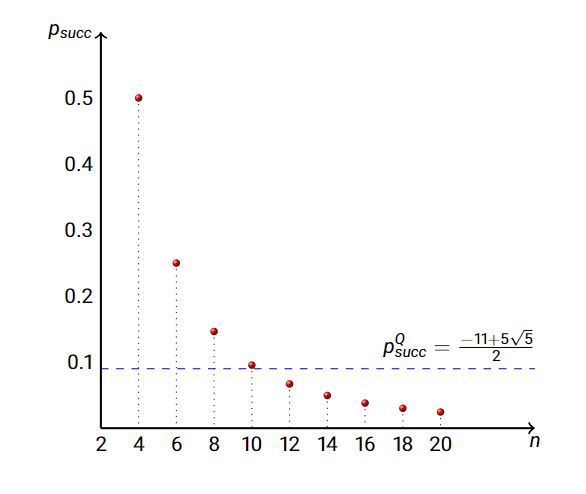}
\caption{(Color online) Red dots denote the maximum success probability of Hardy's nonlocality argument for maximally entangled states of bipartite even gons. The Blue dashed line denotes the optimal quantum success probability of Hardy's nonlocality argument which, in contrast, is obtained for the \textit{non-maximally entangled state}.}\label{fig3}
\end{figure}

\subsection{Hardy's nonlocality for non-maximally entangled states}
For the bipartite pentagon case, we only have two inequivalent classes of pure entangled states, the state $\Phi_J$ of Eq.(\ref{oddJ}) and the state $\Phi_H$ of Eq.(\ref{pentaH}). As already established in Theorem \ref{the1}, the state $\Phi_J$ cannot result in any correlation exhibiting Hardy's nonlocality. So the natural question arises whether the state $\Phi_H$ can lead to such a correlation. Interestingly we find that that state $\Phi_H$ indeed exhibits Hardy's nonlocality. If we consider two incompatible measurements $\mathcal{M}_1\equiv\{e_1,\bar{e}_1\}$ and $\mathcal{M}_2\equiv\{e_5,\bar{e}_5\}$ on Alice's part and two incompatible measurements $\mathcal{N}_1\equiv\{e_1,\bar{e}_1\}$ and $\mathcal{N}_2\equiv\{\bar{e}_2,e_2\}$ on Bob's part, the resulting correlation depicts Hardy's nonlocality. Denoting the outcome corresponding to the first effect as $+1$ and the outcome corresponding to the second one as $-1$, the correlation obtained from these choices of measurements reads as
{\small \begin{align}
\mathbf{P}	
~\equiv~
\begingroup
\setlength{\tabcolsep}{2pt} 
\renewcommand{\arraystretch}{2} 
\begin{array}{c||c|c|c|c|} 
 &(+,+) & (+,-)&(-,+)&(-,-)\\ \hline\hline
\mathcal{M}_{1}\mathcal{N}_{1}&1 -\frac{4\sqrt{5}}{10} & \frac{7\sqrt{5}}{10} - \frac{3}{2} & \frac{7\sqrt{5}}{10} - \frac{3}{2} & 3 - \sqrt{5}\\\hline 
\mathcal{M}_{1}\mathcal{N}_{2} &0 & \frac{3\sqrt{5}}{10} - \frac{1}{2} & \frac{\sqrt{5}}{10} + \frac{1}{2} & 1 -\frac{4\sqrt{5}}{10}\\\hline
\mathcal{M}_{2}\mathcal{N}_{1} &0 & \frac{2\sqrt{5}}{10} & \frac{3\sqrt{5}}{10} - \frac{1}{2}& \frac{3}{2}- \frac{5\sqrt{5}}{10}\\\hline
\mathcal{M}_{2}\mathcal{N}_{2} &\frac{3\sqrt{5}}{10} - \frac{1}{2} & \frac{1}{2} - \frac{\sqrt{5}}{10} & 1 -\frac{2\sqrt{5}}{10} & 0\\\hline
\end{array}~.
\endgroup
\label{pentaco}
\end{align}}
Important to note that the success probability of Hardy's argument in this case is $P\left(+,+|\mathcal{M}_1\mathcal{N}_1\right)=1-\frac{4\sqrt{5}}{10}\approx0.1056$, which is strictly larger than the corresponding optimal quantum value $\frac{5\sqrt{5}-11}{2}\approx0.0902$. Therefore, this particular correlation is beyond quantum in nature, although its CHSH value is strictly less than the Cirel'son's value. The success probability $0.1056$ turns out to be optimal in pentagon theory. 
For the hexagon case, we find that all the six different classes of states depicts Hardy's nonlocality. The choices of measurements and the corresponding Hardy's success probabilities are listed in Table \ref{tab1}.
\begin{table}[t!]
\centering
\begin{tabular}{|c|c|c|c|}
\hline
\multirow{2}{3em}{~~State~~} & Alice's & Bob's & Hardy's \\ 
& ~~Measurements~~ & ~~Measurements~~ & ~~Success~~\\ 
\hline\hline
\multirow{2}{1em}{$\Phi_{I}$} & $\mathcal{M}_1 = \{e_1,e_4\} $ & $\mathcal{N}_1 = \{e_6,e_3\}$  &  \multirow{2}{4em}{~~1/4} \\ 
& $\mathcal{M}_2 = \{e_2,e_5\} $ & $\mathcal{N}_2 = \{e_5,e_2\}$  & \\
\hline
\multirow{2}{1em}{$\Phi_{II}$} & $\mathcal{M}_1 = \{e_4,e_1\} $ & $\mathcal{N}_1 = \{e_4,e_1\}$  &  \multirow{2}{4em}{~~1/20} \\ 
& $\mathcal{M}_2 = \{e_2,e_5\} $ & $\mathcal{N}_2 = \{e_2,e_5\}$  &  \\
\hline
\multirow{4}{1em}{$\Phi_{III}$} & $\mathcal{M}_1 = \{e_3,e_6\} $ & $\mathcal{N}_1 = \{e_1,e_4\}$  &  \multirow{2}{4em}{~~1/16} \\ 
& $\mathcal{M}_2 = \{e_2,e_5\} $ & $\mathcal{N}_2 = \{e_6,e_3\}$  &  \\
\cline{2-4}
& $\mathcal{M}_1 = \{e_4,e_1\} $ & $\mathcal{N}_1 = \{e_1,e_4\}$  &  \multirow{2}{4em}{~~1/8} \\ 
& $\mathcal{M}_2 = \{e_2,e_5\} $ & $\mathcal{N}_2 = \{e_5,e_2\}$  &  \\
\hline
\multirow{4}{1em}{$\Phi_{IV}$} & $\mathcal{M}_1 = \{e_3,e_6\} $ & $\mathcal{N}_1 = \{e_4,e_1\}$  &  \multirow{2}{4em}{~~1/28} \\ 
& $\mathcal{M}_2 = \{e_2,e_5\} $ & $\mathcal{N}_2 = \{e_3,e_6\}$  &  \\
\cline{2-4}
& $\mathcal{M}_1 = \{e_4,e_1\} $ & $\mathcal{N}_1 = \{e_4,e_1\}$  &  \multirow{2}{4em}{~~1/14} \\ 
& $\mathcal{M}_2 = \{e_2,e_5\} $ & $\mathcal{N}_2 = \{e_2,e_5\}$  &  \\
\hline
\multirow{4}{1em}{$\Phi_{V}$} & $\mathcal{M}_1 = \{e_4,e_1\} $ & $\mathcal{N}_1 = \{e_3,e_6\}$  &  \multirow{2}{4em}{~~1/28} \\ 
& $\mathcal{M}_2 = \{e_3,e_5\} $ & $\mathcal{N}_2 = \{e_2,e_5\}$  &  \\
\cline{2-4}
& $\mathcal{M}_1 = \{e_4,e_1\} $ & $\mathcal{N}_1 = \{e_4,e_1\}$  &  \multirow{2}{4em}{~~1/14} \\ 
& $\mathcal{M}_2 = \{e_2,e_5\} $ & $\mathcal{N}_2 = \{e_2,e_5\}$  &  \\
\hline
\multirow{2}{1em}{$\Phi_{VI}$} & $\mathcal{M}_1 = \{e_2,e_5\} $ & $\mathcal{N}_1 = \{e_2,e_5\}$  & \multirow{2}{4em}{~~1/8} \\ 
& $\mathcal{M}_2 = \{e_1,e_4\} $ & $\mathcal{N}_2 = \{e_1,e_4\}$  &  \\
\hline
\end{tabular}
\caption{Measurement choices for Alice and Bob and the corresponding Hardy's success probabilities for the six different classes of entangled states in bipartite hexagon theory.}\label{tab1}
\end{table}

\subsection{Mixed entangled state and Hardy's nonlocality}
We will now consider the possibilities of the Hardy-type nonlocality for those preparation devices that produce \textcolor{blue}{a} pure entangled g-bit \footnote{Analogous to the terms c-bit and qubit for two-level classical and quantum systems respectively, here we adopt the phrase \textit{g-bit} to denote the elementary system of a GPT, with OD=2.} or a pure product g-bit with a pre-declared ignorance, and hence result in a mixed preparation. In quantum theory, such a preparation devise which prepares a two-qubit pure entangled state and a two-qubit pure product state with some predefined ignorance doesn't exhibit Hardy Nonlocality. (Technically any two-qubit mixed state does not exhibit Hardy-type nonlocality \cite{Kar97}. But here we are interested in these special preparation devices as they can help us study some topological properties of the underlying GPT). However, we will show this is not the case in polygonal GPT models. In the following, we will identify the \textit{preparation-measurement reciprocity} \cite{Ballentine14} as the salient feature of quantum theory, which causes such a difference. From now on, by the phrase "mixed entangled state" we would like to mean a convex combination of pure entangled state along with a product g-bit. For the sake of completeness, let us begin with a definition of preparation-measurement reciprocity in quantum theory.
\begin{definition}
[Preparation-measurement reciprocity \cite{Ballentine14}] For every pure quantum preparation  $\omega_{\psi}=\ketbra{\psi}{\psi}$ the corresponding effect, i.e.,  $e_{\psi}=\ketbra{\psi}{\psi}$ clicks certainly whenever the measurement $\mathcal{M}:=\{e_{\psi},u-e_{\psi}\}$ is performed. Furthermore, $\omega_{\psi}$ is the only preparation which passes the effect $e_{\psi}$ certainly.    
\end{definition}
The notion can readily be extended in GPTs. We say a GPT satisfies 'preparation-measurement reciprocity' if for every pure state $\omega$ there exists a unique extremal effect $e_\omega$ that filters the state $\omega$ with certainty. Also note that all such GPT models can be assigned with a finite dimensional linear vector space and hence the effects, i.e., the linear functional on this vector space allows a one-to-one correspondence with the state vectors, which is referred as \textit{weak self-duality} \cite{Janotta11, Barnum08}. This, in turn, further strengthens the condition of preparation-measurement reciprocity in such GPT models.
\begin{definition}
[Hardy-type local correlation] A correlation is called Hardy-type local if it is local and it satisfies Eqs. (\ref{h2})-(\ref{h4}).  
\end{definition}
\begin{definition}
[Trivial Hardy-local theory] A bipartite theory is said to be trivial Hardy-local if no product state results in a Hardy-type local correlation under incompatible measurements performed on the subsystems.   
\end{definition}
With these definitions, we are now in a position to prove an important result. 
\begin{lemma}\label{l1}
The bipartite composition of a GPT will be a 'trivial Hardy-local theory' whenever its local parts have $OD=2$ and satisfy preparation-measurement reciprocity.
\end{lemma}
\begin{proof}
Consider a bipartite GPT in which the local parts have $OD=2$ and satisfy preparation-measurement reciprocity. Let's assume there exists a pure product state $\omega_A\otimes\omega_B$ that can give rise to a  Hardy-type local correlation when incompatible measurements $\{\mathcal{M}_1^A,\mathcal{M}_2^A\}$ and $\{\mathcal{N}_1^B,\mathcal{N}_2^B\}$ are performed on the subsystems. We denote these measurements as $\{e_i^k,u-e_i^k\}$ with $i\in\{1,2\}$ and $k\in\{A,B\}$. Without loss of any generality, we can assume that the effects $\{e_i^k,u-e_i^k\}$ are extreme. The preparation-measurement reciprocity demands that for a pure state $\omega$ on the local part there exists a unique extremal effect $e_{\omega}$ which filters the state $\omega$ perfectly. Now, if the theories with local OD$=2$ are considered, the extremal complementary effect corresponding to  $e_{\omega}$ is given by $u-e_{\omega}$ which must be unique as well.\footnote{One can interpret the preparation-measurement reciprocity theorem in an alternative way for the theories with OD$=2$ : for every pure state $\omega$,  
$(u-e_{\omega})$ is the unique extreme effect that filters $\omega$ with zero probability. Note that, this does not hold good for OD$\geq3$. For instance, one can consider bipartite qutrit systems which is not a trivial Hardy-local theory, in spite of admitting preparation-measurement reciprocity.}

Let us now, write down the equations for $\omega_A\otimes\omega_B$ to create a Hardy-type local correlation.
\begin{subequations}
\begin{align}
\nonumber\Tr(e_1^A\omega_A)\times\Tr(e_1^B\omega_B)=0\\\nonumber
\Tr(e_1^A\omega_A)\times\Tr(e_2^B\omega_B)=0\\\nonumber
\Tr(e_2^A\omega_A)\times\Tr(e_1^B\omega_B)=0
\\\nonumber
\Tr[(u-e_2^A)\omega_A]\times\Tr[(u-e_2^B)\omega_B]=0.
\end{align}
\end{subequations}
Here the first equality follows from the fact that this correlation must be local. Let $\Tr(e_1^A\omega_A)=0$, then preparation-measurement reciprocity, together with the fact that the OD of the local systems is exactly $2$, implies $\Tr(e_2^A\omega_A),\Tr[(u-e_2^A)\omega_A]\neq0$. Thus we must have $\Tr(e_1^B\omega_B)=\Tr[(u-e_2^B)\omega_B]=0$. This is a contradiction as system $B$ also obeys preparation-measurement reciprocity. A similar contradiction will arise if we assume $\Tr(e_1^B\omega_B)=0$. This completes the proof.
\end{proof}
With the help of the above Lemma, we will finally conclude that,
\begin{theorem}\label{t3}
A mixed preparation device, for a theory (with OD=2 for each subsystem) obeying preparation-measurement reciprocity on each part, does not exhibit Hardy nonlocality.
\end{theorem}
\begin{proof}
A local correlation can not exhibit Hardy nonlocality by construction. However, there are deterministic local correlations that satisfy the zero conditions for Hardy nonlocality, {\it i.e.}, Eq. (\ref{h2}) - (\ref{h4}).

If our mixed state always consists of a pure product state for a preparation-measurement reciprocal theory, it follows from Lemma \ref{l1} that either $\mathcal{M}_1^A$ and $\mathcal{M}_2^A$ or $\mathcal{N}_1^B$ and $\mathcal{N}_2^B$ are the compatible pair of measurements.
Under such a condition no entangled state can exhibit Hardy nonlocality, otherwise, it leads to superluminal communication \cite{Andersson05,Busch86,Kar16}. This completes the proof.
\end{proof}


Now, we will show that bipartite compositions of all the discrete operational models with an even number of pure states and specifically the pentagon model exhibit Hardy-type nonlocality when a suitably chosen pure product preparation probabilistically sampled with a pure entangled preparation. Since, for all such theories the OD is exactly $2$, we can identify the absence of preparation-measurement reciprocity in their topology. We conjecture the same feature also holds for any odd $n$-gon models, however, due to the numerical limitations we are bound with $n=5$ case only.   
\begin{theorem}\label{theo3}
For every even ($n$)-gon theory and  $\forall\epsilon\in(0,1]$, there exists a class of mixed entangled states $W_{\epsilon}=\epsilon \Phi_{J}+ (1-\epsilon) \omega_{i}\otimes\omega_{j}$ exhibiting Hardy nonlocality.
\end{theorem}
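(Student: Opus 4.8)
The plan is to exploit bilinearity of the probability rule together with the fact that only product effects exist in the maximal composition. For any product effect $E\otimes F$ and any $\epsilon$ one has
\begin{align*}
P_{W_\epsilon}(E,F)=\epsilon\,P_{\Phi_J}(E,F)+(1-\epsilon)\,(E\cdot\omega_i)(F\cdot\omega_j),
\end{align*}
a convex combination of two nonnegative quantities. Reusing the measurements of Theorem \ref{theo2}, for which $P_{\Phi_J}(E_1^{A+},E_1^{B+})=\sin^2(\pi/n)$, one immediately gets $P_{W_\epsilon}(E_1^{A+},E_1^{B+})\ge\epsilon\sin^2(\pi/n)>0$ irrespective of the product part, so the success condition Eq.(\ref{h1}) holds for every $\epsilon\in(0,1]$. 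The endpoint $\epsilon=1$ is just $\Phi_J$ and is covered by Theorem \ref{theo2}, so only $\epsilon\in(0,1)$ needs work.

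For the three vanishing conditions Eqs.(\ref{h2})--(\ref{h4}), nonnegativity of each summand forces the $\Phi_J$ part and the product part to vanish \emph{separately} once $\epsilon\in(0,1)$. The Theorem \ref{theo2} measurements already annihilate the $\Phi_J$ part, so the problem collapses to choosing vertex states $\omega_i,\omega_j$ that kill the three product terms $(E_1^{A+}\cdot\omega_i)(E_2^{B+}\cdot\omega_j)$, $(E_2^{A+}\cdot\omega_i)(E_1^{B+}\cdot\omega_j)$ and $(E_2^{A-}\cdot\omega_i)(E_2^{B-}\cdot\omega_j)$. Here I would use the flat-face structure of the even polygon --- the topological feature absent in the Bloch ball: a short computation gives $e_k\cdot\omega_m=\frac{1}{2}[\sec(\pi/n)\cos\frac{(2m-2k+1)\pi}{n}+1]$, so each extreme effect $e_k$ annihilates exactly the two neighbouring vertices $\omega_{k+n/2-1}$ and $\omega_{k+n/2}$, furnishing an abundance of state--effect orthogonalities to exploit.

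I would then assign Eqs.(\ref{h2}) and (\ref{h3}) to Alice and Eq.(\ref{h4}) to Bob. Tracking the labels generated in the proof of Theorem \ref{theo2}, Alice's two accepted effects $E_1^{A+}$ and $E_2^{A+}$ are either equal or one step apart around the polygon, so a single vertex is orthogonal to both --- for instance $\omega_{r+n/2}$ when $E_1^{A+}=e_r$ and $E_2^{A+}=e_{r+1}$ --- killing the Alice factors of (\ref{h2}) and (\ref{h3}). For Bob one takes any vertex orthogonal to $E_2^{B-}$, which always exists, killing the Bob factor of (\ref{h4}). All four Hardy conditions then hold, so $W_\epsilon$ is Hardy nonlocal; since a separable state of the maximal composition produces only local correlations, this simultaneously certifies that $W_\epsilon$ is entangled for every $\epsilon\in(0,1]$, without a separate argument. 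For $n=6$ this reproduces the explicit choice $\omega_i\otimes\omega_j=\omega_4\otimes\omega_4$ with the measurements listed for $\Phi_I$ in Table \ref{tab1}.

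The main obstacle I anticipate is the index bookkeeping of the previous step: one must check, uniformly in even $n$, that the effect labels emerging from Theorem \ref{theo2} (with their half-integer and $n/2$ shifts and the residual sign freedoms) indeed leave Alice's two accepted effects adjacent, so that a common annihilating vertex exists, while never requiring a single vertex to be orthogonal to a pair of complementary effects $e,\bar e$ --- which is impossible since $e+\bar e=u$ forces $e\cdot\omega+\bar e\cdot\omega=1$. Once this compatibility is settled, the construction closes immediately.
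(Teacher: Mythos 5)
Your proposal is correct and follows essentially the same route as the paper: reuse the Hardy measurements of Theorem \ref{theo2} for $\Phi_J$, observe that the two accepted effects on Alice's side are adjacent extremal effects sharing a common orthogonal vertex $\omega_r$, pick a vertex $\omega_s$ orthogonal to $E_2^{B-}$ on Bob's side, and mix $\Phi_J$ with $\omega_r\otimes\omega_s$ so that the three zero conditions are preserved while the success probability becomes $\epsilon\sin^2(\pi/n)$. The only (harmless) addition is your explicit remark that the resulting nonlocality itself certifies entanglement of $W_\epsilon$, which the paper leaves implicit.
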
  
\begin{proof}
From Theorem \ref{theo2} it follows that to obtain a Hardy nonlocal behaviour from the state $\Phi_{J}$ it requires $E_{1}^{A+}=e_{r}$ then $E_{2}^{A+}=e_{r\pm1}$. In an even-gon theory every two consecutive effects $e_{i}$ and $e_{i\oplus_{n}1}$ click with certainty on the state $\omega_{i}$ and they never click on state $\omega_{i\oplus_{n}n}$ (here modulo operation is same as defined in footnote of page 7).
Therefore, $\forall r\in\{1,2,\cdots,n\}$ there is a state $\omega_{r}\in\Omega^{A}$ in Alice's side, such that $\Tr((E_{1}^{A+})^{T}\omega_{r})=\Tr((E_{2}^{A+})^{T}\omega_{r})=0$. Similarly, there is a state $\omega_{s}\in\Omega^{B}$, such that $\Tr((E_{2}^{B-})^{T}\omega_{s})=0$. Evidently, the state $\omega_{r}\otimes\omega_{s}$ satisfies all the conditions Eq. (\ref{ho2})-(\ref{ho4}) and equals to zero for Eq. (\ref{ho1}). So, for the state  $W_{\epsilon}=\epsilon\Phi_{J}+ (1-\epsilon)\omega_{r}\otimes\omega_{s}$, with $\epsilon\in(0,1]$, all the conditions (\ref{ho2}) - (\ref{ho4}) are satisfied, with  $\tr[(E_{1}^{A+}\otimes E_{1}^{B+})^\mathrm{T}W_{\epsilon}]=\epsilon\times \sin^{2}(\frac{\pi}{n})$. Clearly, for $\epsilon\neq1$, the state $W_{\epsilon}$ is a mixed state and hence establishes the claim of the theorem.  
\end{proof}
In a similar spirit, it is possible to show that the mixed state of odd-gon theories can also exhibit Hardy's nonlocality. In the following, we give a proof for the bipartite pentagon theory.
\begin{theorem}\label{theo4}
For every value of $\epsilon\in(0,1]$, the mixed entangled state $W_{\epsilon}=\epsilon \Phi_{H}+ (1-\epsilon)\omega_{i}\otimes\omega_{j}$ exhibits Hardy-type nonlocality for suitable choice of measurement, whenever $\omega_{i}\otimes\omega_{j} \in \{\omega_{3}\otimes\omega_{4},\omega_{3}\otimes\omega_{5},\omega_{4}\otimes\omega_{3},\omega_{4}\otimes\omega_{4},\omega_{5}\otimes\omega_{3}\}$.
\end{theorem}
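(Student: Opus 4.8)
The approach mirrors the proof of Theorem \ref{theo3} for even gons. By linearity of the effects, for any product effect $E\otimes F$ the outcome probability on the mixture factorizes as $P_{W_\epsilon}(E,F)=\epsilon\,P_{\Phi_H}(E,F)+(1-\epsilon)\,(E\cdot\omega_i)(F\cdot\omega_j)$, since on a product state $\omega_i\otimes\omega_j$ the joint probability splits into the product of the local probabilities. Both summands are non-negative, and $\Phi_H$ already satisfies the three vanishing Hardy conditions Eqs.(\ref{h2})--(\ref{h4}). Hence $W_\epsilon$ satisfies them if and only if the product term vanishes on each, i.e.\ for every vanishing condition $E\otimes F$ one needs $E\cdot\omega_i=0$ or $F\cdot\omega_j=0$. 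The success condition is automatic, because $P_{W_\epsilon}(E^{A+}_1,E^{B+}_1)\ge \epsilon\,P_{\Phi_H}(E^{A+}_1,E^{B+}_1)>0$ for all $\epsilon\in(0,1]$, the first term being strictly positive as $\Phi_H$ is Hardy nonlocal. The theorem therefore reduces to exhibiting, for each listed state, a Hardy measurement setting of $\Phi_H$ that the added product state does not spoil.

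The tool for the verification is the orthogonality bookkeeping of the pentagon. A direct computation gives $e_i\cdot\omega_j=\bigl(1+r_5^2\cos(2\pi(i-j)/5)\bigr)/(1+r_5^2)$, whence $e_i\cdot\omega_j=0$ iff $i-j\equiv\pm 2 \pmod 5$, and $\bar e_i\cdot\omega_j=1-e_i\cdot\omega_j=0$ iff $i\equiv j\pmod 5$. I would take the explicit Hardy setting already used for $\Phi_H$, namely Alice's $\{e_1,\bar e_1\},\{e_5,\bar e_5\}$ and Bob's $\{e_1,\bar e_1\},\{\bar e_2,e_2\}$ (the correlation $\mathbf{P}$), for which the three vanishing requirements become ``$e_1\cdot\omega_i=0$ or $\bar e_2\cdot\omega_j=0$'' (h2), ``$e_5\cdot\omega_i=0$ or $e_1\cdot\omega_j=0$'' (h3), and ``$\bar e_5\cdot\omega_i=0$ or $e_2\cdot\omega_j=0$'' (h4). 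Translating through the orthogonality relations shows these hold for $\omega_3\otimes\omega_4$, $\omega_3\otimes\omega_5$ and $\omega_4\otimes\omega_4$; in each case the product term also vanishes on the success slot, so $W_\epsilon$ inherits success $\epsilon\,P_{\Phi_H}(e_1,e_1)$.

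The remaining two states $\omega_4\otimes\omega_3$ and $\omega_5\otimes\omega_3$ are \emph{not} compatible with $\mathbf{P}$, and the clean route is to use that $\Phi_H$ is a symmetric state, so the Swap transformation of Eq.(\ref{unitaries}) stabilizes it. Applying Swap to $\mathbf{P}$ interchanges Alice's and Bob's measurements and, by the symmetry $P_{\Phi_H}(E,F)=P_{\Phi_H}(F,E)$, produces a second bona fide Hardy setting of $\Phi_H$; moreover the product state $\omega_i\otimes\omega_j$ is left undisturbed by the swapped setting precisely when $\omega_j\otimes\omega_i$ is left undisturbed by $\mathbf{P}$. Since $\omega_4\otimes\omega_3$ and $\omega_5\otimes\omega_3$ are the Swap images of $\omega_3\otimes\omega_4$ and $\omega_3\otimes\omega_5$ (while $\omega_4\otimes\omega_4$ is Swap-invariant), they are handled by Swap$(\mathbf{P})$, completing all five cases and hence the proof for every $\epsilon\in(0,1]$.

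\textbf{Main obstacle.} The conceptual core---linearity together with non-negativity forcing the product state to respect each zero---is immediate; the genuine work is the combinatorial matching, and in particular the realization that a single Hardy setting does not cover the whole list. The main difficulty is thus to organize the case analysis economically, and the symmetry of $\Phi_H$ under Swap is exactly what collapses five verifications into three plus a symmetry argument. A secondary point demanding care is the outcome relabeling (which effect carries the $+$ outcome in each measurement), since matching a given product state to the correct vanishing condition depends on it.
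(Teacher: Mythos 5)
Your proof is correct and follows the same basic strategy as the paper: decompose $P_{W_\epsilon}=\epsilon P_{\Phi_H}+(1-\epsilon)P_{\omega_i\otimes\omega_j}$, observe that non-negativity of both terms forces the product-state contribution to vanish on every Hardy zero, and then exhibit explicit measurement settings for each of the five listed states. The difference is in how the witnesses are organized. The paper simply lists three settings: the symmetric choice $\mathcal{M}_1=\mathcal{N}_1=\{e_1,\bar e_1\}$, $\mathcal{M}_2=\mathcal{N}_2=\{\bar e_2,e_2\}$ for $\omega_3\otimes\omega_4$, $\omega_4\otimes\omega_3$, $\omega_4\otimes\omega_4$, the setting $\mathbf{P}$ for $\omega_3\otimes\omega_5$, and its mirror for $\omega_5\otimes\omega_3$. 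You instead cover $\omega_3\otimes\omega_4$, $\omega_3\otimes\omega_5$, $\omega_4\otimes\omega_4$ with the single setting $\mathbf{P}$ already displayed in the text, and dispatch the remaining two states by noting that $\Phi_H$ is a symmetric state, so $\mathrm{Swap}(\mathbf{P})$ is again a valid Hardy setting and handles exactly the Swap images of the states already covered. Your explicit orthogonality criterion ($e_i\cdot\omega_j=0$ iff $i-j\equiv\pm2\pmod 5$, $\bar e_i\cdot\omega_j=0$ iff $i\equiv j$) makes the case-checking transparent, which the paper leaves implicit; I verified that all five cases go through as you claim, with success probability $\epsilon\,(1-\tfrac{4\sqrt5}{10})$ matching the paper's value. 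What your route buys is economy (one setting plus a symmetry instead of three independent settings) and a reusable criterion; what the paper's buys is directness, since no appeal to the symmetry of $\Phi_H$ is needed.
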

\begin{proof}
If we consider two incompatible measurements $\mathcal{M}_1\equiv\{e_1,\bar{e}_1\}$ and $\mathcal{M}_2\equiv\{\bar{e}_2,e_2\}$ on Alice's part and two incompatible measurements $\mathcal{N}_1\equiv\{e_1,\bar{e}_1\}$ and $\mathcal{N}_2\equiv\{\bar{e}_2,e_2\}$ on Bob's part, then the resulting correlation obtained from the state $W_\epsilon$ depicts Hardy's nonlocality, whenever $\omega_{i}\otimes\omega_{j}\in \{\omega_{3}\otimes\omega_{4},\omega_{4}\otimes\omega_{3},\omega_{4}\otimes\omega_{4}\}$. For $\omega_{i}\otimes\omega_{j}=\omega_{3}\otimes\omega_{5}$, we require the measurements $\mathcal{M}_1\equiv\{e_1,\bar{e}_1\}$, $\mathcal{M}_2\equiv\{e_5,\bar{e}_5\}$ on Alice's part and $\mathcal{N}_1\equiv\{e_1,\bar{e}_1\}$, $\mathcal{N}_2\equiv\{\bar{e}_2,e_2\}$ on Bob's part; and for $\omega_{i}\otimes\omega_{j}=\omega_{5}\otimes\omega_{3}$, $\mathcal{M}_1\equiv\{e_1,\bar{e}_1\}$, $\mathcal{M}_2\equiv\{\bar{e}_2,e_2\}$ on Alice's side and $\mathcal{N}_1\equiv\{e_1,\bar{e}_1\}$, $\mathcal{N}_2\equiv\{e_5,\bar{e}_5\}$ on Bob's side suffice the purpose. The success probability turns out to be $P^{succ}=\epsilon\times(1-\frac{4\sqrt{5}}{10})\approx0.1056\epsilon$.
\end{proof}
While from the perspective of Bell-nonlocality, the polygonal state spaces exhibit no characteristic distinction (except the quantitative bounds) from their continuous counterpart QT, Theorem \ref{theo3} and \ref{theo4} exhibit such a distinction for Hardy-type nonlocal arguments. However, the signature of such a difference vanishes considering the bipartite compositions of higher quantum systems. In particular, for higher dimensional QT, there are incompatible local measurements with the proper choice of separable bipartite state, which can generate any of the extreme local correlations, and hence there are mixed entangled states depicting Hardy-type nonlocal arguments. This, in turn, directs towards the state space topology of the qubit system and the continuity therein to demonstrate it as unique among the possible two-dimensional state space structures.

\section{Hardy nonlocality and almost quantum correlation}\label{five}
We will now delve into characterizing a set of almost quantum correlations obtained from the bipartite maximal compositions of the odd-gon theories, specifically that for which the local state-space can be described with exactly \textit{five} pure preparations. Mathematically, a non-signalling behaviour is said to be an almost quantum correlation, if and only if it can be realized by a couple of projectors acting on a normalized vector of a joint Hilbert space and the statistics obtained from the projector-vector pair is independent of party-permutation \cite{Navascues15}. This readily confirms that quantum correlations are the strict subset of such correlations, where the second condition is satisfied by the structures of commutative algebra. Note that the general characterization of the almost quantum correlations is possible via a semidefinite programming (see the \textit{methods} section in \cite{Navascues15}).
\begin{definition}\label{defi2}
A bipartite correlation p(ab|xy) is said to be an almost quantum correlation if there exists a Hilbert space $\mathcal{H}$, a normalized state $\ket{\phi}\in \mathcal{H}$ and projection operators ${E^a_x},{F^b_y} \in B(\mathcal{H})$ such that the following conditions hold
\begin{itemize}
    \item $\sum_a E^a_{x} = \sum_b F^b_{y} = \mathbb{I}_\mathcal{H},~  \forall~x,y$
    \item $E^a_{x}F^b_{y}\ket{\phi}=F^b_{y}E^a_{x}\ket{\phi},~\forall~a,b,x,y$
    \item $p(ab|xy)=\bra{\phi}E^a_{x}F^b_{y}\ket{\phi},~\forall~a,b,x,y$.
\end{itemize}
\end{definition}
Despite having a compact mathematical description there is no known operational model, prescribed for almost quantum correlations. Moreover, an operational model, which contains all the mathematical consistent states and effects is unable to reproduce the complete set of almost quantum correlations \cite{Sainz18}.

However, in the following, we will show that the strict subset of $2$-input-$2$-output almost quantum correlations can be obtained from the non-maximally entangled state of the bipartite composition of local pentagon models.
\begin{proposition}
A continuous family of almost quantum correlations can be obtained from the maximal bipartite composition of the pentagon model.
\end{proposition}
\begin{proof}
We begin with a numerical observation: the SDP formulation for characterizing an almost quantum correlation confirms that any $2$-input-$2$-output non-signaling correlations obtained from the non-maximally pure entangled state $\Phi_{H}$ (see Eq. (\ref{pentaH})) resides inside the set of almost quantum correlations. On the other hand, the correlations obtained from any product state is a local one, and the probabilistic mixture of which with any almost quantum correlation is also an almost quantum correlation point. This follows from the fact that almost quantum correlations are closed under classical post-processing and wirings \cite{Navascues15}.

Therefore, the convex combination of the pure product state $\omega_5\otimes\omega_3$ and $\Phi_H$ under the local measurement settings $\mathcal{M}_1\equiv\{e_1,\bar{e}_1\}$ and $\mathcal{M}_2\equiv\{e_2,\bar{e}_2\}$ for Alice and $\mathcal{N}_1\equiv\{e_1,\bar{e}_1\}$ and $\mathcal{N}_2\equiv\{\bar{e}_5,e_5\}$ for Bob will produce a set of almost quantum correlations. Also, it follows from Theorem \ref{theo4} that such a correlation will exhibit Hardy-type nonlocality with a violation of $0.1056\epsilon$, where $\epsilon$ is the probability of the state $\Phi_H$ in the convex mixture. Since the quantum theory does not allow Hardy violation to be larger than $0.0902$, the above-mentioned convex combination will be strictly almost quantum (almost quantum but not quantum) when $\epsilon>\frac{0.0902}{0.1056}=0.8542$. This implies that for $\epsilon\in(0.8542,1]$ the correlations obtained from the state $\epsilon \Phi_H + (1-\epsilon) \omega_5\otimes\omega_3$, in the bipartite composition of local pentagon model, is a subset of almost quantum (but not quantum) correlations.
\end{proof}
Note that, the $2-2-2$ correlations obtained from the same state and measurements, for the region $\epsilon\in(0,0.8542]$, can not be trivially concluded as quantum correlations. Answering such a question can be a potential direction to understand the possible overlap between these operational models and quantum theory in their (nonlocal) correlation space. This requires further analysis in terms of properly devised Bell-type inequalities, which is beyond the scope of the present paper. However, in the following we will confirm about their nonlocal overlap, considering another form set of $2-2-2$ correlations.

\section{Intersection of odd-gon with quantum}\label{six}
Here we show that there exist correlations which are present in both odd-gon theory and Quantum theory. In fact, we can prove that given any nonlocal theory, there exist correlations which are present in that theory as well as quantum theory. Needless to say, that such an overlap will be trivially there for the set of local correlations. However, the intersection for the nonlocal region can be proved with a very simple twirling argument. Given any nonlocal correlation $p(ab|xy)$. We can consider the following set of Local relabelings of $p(ab|xy)$. 
\begin{subequations}
\begin{align}
\nonumber
R_0&\equiv a'=a\oplus1,b'=b\oplus1\\
\nonumber
R_1&\equiv x'=x\oplus1,b'=b\oplus y\\
\nonumber
R_2&\equiv y'=y\oplus1,a'=a\oplus x\\
\nonumber
R_3&\equiv\begin{cases} 
x'=x\oplus1,y'=y\oplus1,\\
a'=a\oplus x,b'=b\oplus y\oplus1
\end{cases}
\end{align}
\end{subequations}
It can be argued that under any of the local relabelings $\{R_0,\cdots R_3\}$ the $B_{CHSH}$ value of the correlation does not change. Now we construct a new correlation $q_{iso}(ab|xy)$ given by
\begin{align}
q_{iso}(ab|xy)=\frac{1}{8}[(1+R_1+R_2+R_3)p(ab|xy)\nonumber\\
        ~~~~+R_0(1+R_1+R_2+R_3)p(ab|xy)]
\end{align}
It can be seen that such a correlation will have the same $B_{CHSH}$ value as $p(ab|xy)$ and it will also be an isotropic correlation. Thus we have established that given any nonlocal correlation $p(ab|xy)$ in any theory, the theory also exhibits the isotropic correlation $q_{iso}(ab|xy)$ by mixing $8$ different local relabelings of $p(ab|xy)$. Moreover, the nonlocal content under such an operation remains invariant. Thus any nonlocal theory must have some overlap with the isotropic quantum correlations. This completes the proof.

\section{Inequivalence of entanglement and B-CHSH nonlocality in polygon models}\label{seven}
While all bipartite quantum pure states exhibit nonlocality \cite{Gisin91}, the pure states are too idealistic when experimental situations are considered. So naturally the question arises whether mixed states exhibit such nonlocal behaviour. A particular family that is of interest to us is the Werner class of states
\begin{equation}
\mathcal{W}_{p}=p\ket{\psi^{-}}_{AB}\bra{\psi^{-}} + (1-p)\frac{\mathcal{I}}{2}\otimes \frac{\mathcal{I}}{2},
\end{equation}
where $\ket{\psi^-}:=(\ket{01}-\ket{10})/\sqrt{2}\in\mathbb{C}^2\otimes\mathbb{C}^2$ and $p\in[-1/3,1]$. In particular, for $p\in[0,1]$ the state can be thought of as a statistical mixture of the singlet state and white noise. Straightforward calculation yields the state $\mathcal{W}_{p}$ is entangled for $p>\frac{1}{3}$ and violates CHSH inequality for $p>\frac{1}{\sqrt{2}}$. In a seminal result, Werner established that for $\frac{1}{3}< p \leq \frac{1}{2}$ the statistics obtained from the state $\mathcal{W}_{p}$ through local projective measurements can be explained by local hidden variable model \cite{Werner89}. Later Barrett extended this model for arbitrary local measurement for the parameter range $p \leq \frac{5}{12}$ \cite{Barret02} (see also \cite{Rai12}). This result is quite important as it establishes that entanglement and nonlocality as two inequivalent notions. A similar question one can ask in polygon theories. Our next result partially addresses this question.
\begin{theorem}
For all the theories where $n>4$, there exists a class of mixed entangled states that does not violate CHSH inequality.
\end{theorem}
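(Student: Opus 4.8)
The plan is to exhibit, for every polygon theory with $n>4$, an explicit one-parameter family of mixed entangled states together with a hidden-variable model that reproduces all the CHSH correlations, thereby certifying no CHSH violation. A natural candidate is the noisy analogue of the maximally entangled state, $W_{\epsilon}=\epsilon\,\Phi_{J}+(1-\epsilon)\,\frac{u}{3}\otimes\frac{u}{3}$ (or a mixture with the completely mixed state of each polygon), parametrized by the noise weight $\epsilon\in(0,1]$. First I would fix the normalization so that $W_{\epsilon}$ is a legitimate normalized bipartite state in the maximal tensor product, and identify the threshold $\epsilon^{*}(n)$ below which $W_{\epsilon}$ stays entangled but the two-input two-output correlations it generates admit a local model.

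Second, I would compute the CHSH-type correlators for $W_{\epsilon}$ using the outcome-probability formula $P_{\Phi_{J}}(E,F)=E\cdot F$ established in the proof of Theorem \ref{the1} (for odd $n$) and its even-$n$ analogue from Theorem \ref{theo2}. By linearity, the correlators for $W_{\epsilon}$ are simply $\epsilon$ times the pure-state correlators plus the product contribution of the noise term, so the maximal CHSH value scales down continuously in $\epsilon$. The key quantitative input is the known maximal CHSH value $\beta_{n}$ attainable from $\Phi_{J}$ in the $n$-gon, reported in Ref.~\cite{Janotta11}: for odd $n$ this lies strictly below Tsirelson's bound, and for even $n$ it reaches or exceeds it. Solving $\epsilon\,\beta_{n}\le 2$ gives a nontrivial interval of $\epsilon$ on which CHSH is satisfied, and I would check that this interval overlaps with the entanglement regime $\epsilon>0$, which it does since $W_{\epsilon}$ is entangled for every $\epsilon>0$ (being a nontrivial mixture with a genuinely entangled extreme state).

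The main obstacle is that merely bounding the CHSH value does not by itself prove the \emph{absence} of CHSH violation over \emph{all} measurement choices, nor does it establish a fully local hidden-variable model in the GPT setting, where the effect structure differs from the qubit case. The cleanest route around this is to show that below some $\epsilon$ the state $W_{\epsilon}$ lies inside the \emph{minimal} tensor product cone, i.e. becomes separable as a convex mixture of product states $\sum_{i}p_{i}\,\omega_{A}^{i}\otimes\omega_{B}^{i}$; any such separable state trivially admits a local model and satisfies every Bell inequality including CHSH. Thus I would recast the problem as a membership question: find the largest $\epsilon$ for which $\epsilon\,\Phi_{J}+(1-\epsilon)\,\sigma_{A}\otimes\sigma_{B}$ can be written as a convex combination of the product states $\omega_{i}\otimes\omega_{j}^{\mathrm{T}}$ of Eq.~(\ref{extsta}), which reduces to a finite-dimensional feasibility problem solvable by the symmetry of the polygon.

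Finally, I would assemble the argument: pick $\epsilon$ small enough to guarantee separability (hence no CHSH violation) yet strictly positive (hence entangled relative to the maximal composition, since the decomposition lives in $\otimes_{\min}$ only above threshold and $W_{\epsilon}$ is provably not a product state). I expect the delicate point to be pinning down the separability threshold uniformly in $n>4$ and confirming it is strictly positive for all such $n$; exploiting the dihedral symmetry group $\mathbb{T}_{n}$ of the polygon to reduce the decomposition to a single representative product state should make this tractable, and the restriction $n>4$ is presumably what ensures the marginal of $\Phi_{J}$ sits strictly inside the polygon, leaving room for a noisy separable ball around it.
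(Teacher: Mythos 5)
Your candidate family $W_{\epsilon}=\epsilon\,\Phi_{J}+(1-\epsilon)\,u\otimes u$ and the linearity argument for the CHSH value match the paper's starting point, but there are two genuine gaps. First, you assert that $W_{\epsilon}$ is entangled for every $\epsilon>0$ because it is ``a nontrivial mixture with a genuinely entangled extreme state''. That is false in general: mixing an entangled extreme state with enough noise can land you back in the separable cone $\Omega_A\otimes_{\min}\Omega_B$ (exactly as for the quantum Werner state, separable for $p\le 1/3$), and being a non-product state is not the same as being entangled. Establishing a quantitative entanglement threshold is the nontrivial half of the theorem; the paper does it by exhibiting an \emph{entangled effect} $\bar{E}_{ab}$ (taken from \cite{Saha20(1)}) that returns a negative value on $W_{\epsilon}$ for $\epsilon$ large enough --- since every separable state yields non-negative probability on every allowed effect, negativity witnesses entanglement. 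This gives $p_E>r_n^2/2$ for odd gons and $p_E>1/2$ for even gons, and the theorem then follows by verifying $p_E<p_{NL}$, which is where the hypothesis $n>4$ actually enters (via $\mathbb{B}^{(n)}_{\max}<4$ for even $n>4$, and the sub-Tsirelson bound for odd gons), not, as you guess, through the marginal of $\Phi_J$ sitting strictly inside the polygon.

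Second, your proposed repair is self-defeating: you suggest choosing $\epsilon$ small enough that $W_{\epsilon}$ becomes \emph{separable} so that locality is automatic, while simultaneously claiming the state is entangled. A state in the minimal tensor product is by definition not entangled, so this route cannot produce the object the theorem asks for. Moreover, the worry that motivated the detour is unfounded: the theorem only requires no violation of the CHSH inequality, and since $\mathbb{B}^{(n)}_{\max}$ reported in \cite{Janotta11} is already an optimum over \emph{all} local measurement choices, the linearity bound $\mathbb{B}_{\max}(W_{\epsilon})\le \epsilon\,\mathbb{B}^{(n)}_{\max}+(1-\epsilon)\,\mathbb{B}_{\max}(u\otimes u)$ controls every measurement choice; no hidden-variable model is needed. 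Keep your first route, drop the separability detour, and supply the missing entanglement witness.
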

\begin{proof}
Consider the class of states, $\mathcal{W}^{O}_{p}:=p\Phi_{J}+(1-p)u\otimes u$,  where $p\in[0,1]$ and $\Phi_J$ is the state given in Eq. (\ref{oddJ}). Clearly, $\mathcal{W}^{O}_{p}$ is a mixed state whenever $p\in[0,1)$. 

For the odd-gons, the expectation value of any measurement $\langle\mathcal{M}_i\mathcal{N}_j\rangle$ on the maximally mixed state $u\otimes u$ reads as $\left(\frac{r_n^2-1}{r_n^2+1}\right)^2$.
For the state $\mathcal{W}^{O}_{p}$ the maximum value of Bell-CHSH expression becomes $\mathbb{B}_{\max}(\mathcal{W}^{O}_{p})=p\mathbb{B}^{(n)}_{\max}+2(1-p)\left(\frac{r_n^2-1}{r_n^2+1}\right)^2$, where $\mathbb{B}^{(n)}_{\max}$ is the maximum Bell-CHSH value obtained from the $n$-gonal maximum entangled state $\Phi_J$. Denoting the range of the parameter $p_{NL}$ of the state $\mathcal{W}^{O}_{p}$ showing Bell-CHSH nonlocality we have 
\begin{align}\label{eq20}
p_{NL}  > 8r_n^2\left[\mathbb{B}^{(n)}_{\max}\left(r_n^2+1\right)^2-2\left(r_n^2-1\right)^2\right]^{-1}.
\end{align}
We now proceed to find the range of the parameter $p$ of the state $\mathcal{W}_{p}^O$ for which the state is entangled. Note that, unlike quantum theory, in this model, we do not have any criterion like negative partial transposition (NPT) \cite{Peres96,Horodecki96} that can detect the entanglement of a state. However, if we can find an effect that is entangled and yields a negative probability on some state, then by definition, the state must be entangled\footnote{At this point, an observant reader should note that in quantum theory all the entangled states yield non-negative probability on all the entangled effects. This is due to the fact that state and effect cones are self-dual. However, in abstract GPT, this might not be the case, which in turn results in different consistent compositions for the same elementary systems. At this point the Refs.\cite{DallArno17,Saha20,Naik22,Lobo21,Sen2022,Patra2023} are worth mentioning.}. This is because any product state on an entangled effect always gives a non-negative probability. For the odd-gon theory, it has been shown that the effects $E_{ab}$ and $\bar{E}_{ab}:=u\otimes u-E_{ab}$ are entangled \cite{Saha20(1)}, where  
\begin{equation*}
E_{ab}=\frac{1}{1+r_{n}^2}
\begin{pmatrix}
1 & 0 & 0\\
0 & 1 & 0\\
0 & 0 & 1
\end{pmatrix}  . 
 \end{equation*}
A straightforward calculation yields  $\Tr[\bar{E}_{ab}^T\mathcal{W}_{p}^O]=p\left(\frac{r_n^2-2}{1+r_n^2}\right)+(1-p)\frac{r_n^2}{1+r_n^2}$. Denoting the range of parameter of the state $\mathcal{W}_{p}^O$ as $p_E$ for which the state must be entangled we have 
\begin{align}\label{eq21}
\Tr[\bar{E}_{ab}^T\mathcal{W}_{p_E}^O]<0\implies p_E >\frac{r_n^2}{2}.
 \end{align}
Comparing Eq. (\ref{eq20}) and Eq. (\ref{eq21}), it is evident that $p_E<p_{NL}~\forall~\mbox{odd}~n$. Therefore within the range $p_E\leq p<p_{NL}$ the state $\mathcal{W}_{p_E}^O$ is entangled but it does not violate Bell-CHSH inequality.

For the even gon theory we consider the state $\mathcal{W}^{E}_{p}:=p\Phi_{J}+(1-p)u\otimes u$,  where $p\in[0,1]$ and $\Phi_J$ is the state given in Eq. (\ref{evenJ}). Noting the fact that $\langle\mathcal{M}_i\mathcal{N}_j\rangle$ on $u\otimes u$ turns out to be zero in this case and using the entangled effect 
\begin{equation*}
E_{ab}=\frac{1}{2}
\begin{pmatrix}
-\cos\frac{\pi}{n} & -\sin\frac{\pi}{n} & 0\\
\sin\frac{\pi}{n} & -\cos\frac{\pi}{n} & 0\\
0 & 0 & 1
\end{pmatrix}   
\end{equation*}
identified in \cite{Saha20(1)}, a similar calculation yields 
\begin{equation}\label{eq22}
p_{NL}>\frac{2}{\mathbb{B}^{(n)}_{\max}},~~\&~~p_E > \frac{1}{2}, 
\end{equation}
where $\mathbb{B}_{\max}^{(n)}$ is the maximal Bell-CHSH value for the state $\Phi_J$. Since for all the even gons $\mathbb{B}_{\max}^{(n)}<4$, whenever $n>4$ \cite{Janotta11}, therefore within the range $p_E\leq p<p_{NL}$ the state $\mathcal{W}^{E}_{p}$ does not violate Bell-CHSH inequality although it is entangled. This completes the proof.  
\end{proof}
Important to note that, for $n=4$, we Have  $\mathbb{B}_{\max}^{(4)}=4$, which thus leads to the fact that all the entangled states in this theory are Bell-CHSH nonlocal. 
 
\section{Discussions}\label{eight}
Our investigation centered on the distinct classes of extreme bipartite states within locally regular polygon systems, examining their nonlocal behavior. Employing the methodology outlined in Section (\ref{IIIA}), we demonstrated that the bipartite square state space comprises a sole class of entangled states. Notably, for local pentagons and hexagons, we identified two and six categories of entangled states, respectively. Furthermore, we observed that while states represented by Eq. (\ref{oddJ}) and Eq. (\ref{evenJ}), alongside their equivalent counterparts, share similarities with maximally entangled two-qubit states, the newly identified classes resemble non-maximally entangled states of two qubits.

We delved into exploring the nonlocality of polygon theories through Hardy's nonlocal argument. Surprisingly, in contrast to bipartite quantum theory with local OD$=2$, we found that maximally entangled states associated with even-gon theories satisfy the mentioned argument with a success probability of $\sin^2{(\frac{\pi}{n})}$. As the number of vertices in even-gon compositions increases, the success probability decreases accordingly. Conversely, we did not observe a similar nonlocal scenario in the case of bipartite odd-gon systems with maximally entangled states. However, within the bipartite pentagon theory, we discovered a higher degree of nonlocality with the non-maximally entangled states $\Phi_H$ compared to the maximum success probability attainable through quantum theory. This finding places the correlation outside the realm of quantum phenomena, although it remains within the domain of almost quantum correlations $\Tilde{Q}$.

Our analysis revealed that akin to quantum theory, nonlocality and entanglement remain distinct concepts in these discrete theories, especially when the number of vertices in the local state spaces exceeds four. This conclusion was established by utilizing Werner-like mixed states, probabilistic mixtures of maximally entangled states with the maximally mixed ones. Notably, while these theories showcase a disparity between entanglement and nonlocality in the probability range of the entangled state, this disparity vanishes in the case of the composite square theory.

Moreover, beyond the hexagon state space, our research opens up several questions for further investigation. While we propose a method for examining the existence of various classes of extremal entangled states, the approach becomes computationally inefficient for arbitrary higher-gon theories. Developing a more efficient method applicable to these theories is imperative. Additionally, the characterization of Hardy nonlocality in higher odd-gon theories remains an open question due to the incompleteness of our findings.

Furthermore, it would be intriguing to explore the potential existence of overlapping regions between the correlations obtained from bipartite local pentagon theories and those arising from bipartite qubit states. Specifically, understanding whether all correlations reproducible by bipartite qubit states can be considered as the common intersection of the local odd-gon theories or if there are overlapping regions among different theories would be of significant interest.

{\bf ACKNOWLEDGMENTS: }We would like to sincerely acknowledge the pertinent comments of the anonymous referee, helping us to formalize Theorem 3 of the present manuscript in a more comprehensive way. MK, TM, SGN, and MB thankfully acknowledge fruitful discussions with Mir Alimuddin, Edwin Peter Lobo, Ram Krishna Patra, and Samrat Sen. TG and SS would like to acknowledge fruitful discussions with Subhendu B. Ghosh. TG is supported by the Hong Kong Research Grant Council through Grants No. 17307719 and 17307520 and though the Senior Research Fellowship Scheme SRFS2021- 7S02, by the Croucher Foundation, and by the John Templeton Foundation through Grant No. 62312, The Quantum Information Structure of Space-time (qiss.fr). MB acknowledges support through the National Mission in Interdisciplinary Cyber-Physical Systems from the Department of Science and Technology through the IHUB Quantum Technology Foundation (Grant no: IHUB/PDF/2021-22/008).

\onecolumn\newpage
\appendix

\section{Finding the extreme entangled states}
As already discussed, the amount of computations needed to find all the extreme states grows drastically with the number of extreme states of the elementary polygon systems. However, we can cut down this search space significantly by looking into the structure of the problem. Firstly we note that finding all extreme states is not necessary to find the entanglement classes. Once we have one state from each class, we can find all other extreme states by the action of the Local reversible transformations. Here we discuss a method to directly find a representative element for different entanglement classes.

As noted earlier atleast 8 hyperplanes are required to find out an extreme state. These hyperplane equations essentially represent positivity conditions. For the bipartite $n$-gon system let us denote the set of all extreme product effects as $\mathcal{P}_{ef}[n]\equiv\{E_1,\dots,E_{n^2}\}$. Let the set $S_1\equiv \{a_i\}_{i=1}^8\subset \mathcal{P}_{ef}[n]$ lead to a solution state (extreme) $\omega_1$ and the set $S_2\equiv \{b_j\}_{j=1}^8\subset \mathcal{P}_{ef}[n]$ lead to a solution state $\omega_2$. 
\begin{definition}
[Local reversible equivalence] Two sets $S_1$ and $S_2$ will be called equivalent under local reversible transformation (LRT) if $\exists~s_1,s_2\in\{+,-\}$ and $k_1,k_2\in\{1,\dots,n\}$ such that $S_2=\bigcup_{i=1}^8\{(\mathcal{T}_{k_1}^{s_1}\otimes\mathcal{T}_{k_2}^{s_2})a_i~|~a_i\in S_1\}$.
\end{definition}
In such a case the corresponding solutions $\omega_1$ and $\omega_2$ must also be $LR$ equivalent, {\it i.e.}, $\omega_2=(\mathcal{T}_{k_1}^{s_1}\otimes\mathcal{T}_{k_2}^{s_2})\omega_1$. To formally characterise this $LR$ connected solutions in a systematic way we recall some preliminary concepts from group theory in the following subsection. 

\subsubsection{Preliminaries}
Let $\mathcal{G}$ be a finite group and let $O^{\mathcal{G}}$ denote a set of objects on which the group elements act. The set $O^{\mathcal{G}}$ is closed under action of group elements, {\it i.e.} $g(o_1)\in O^{\mathcal{G}},\forall~g\in\mathcal{G}~\&~o_1\in O^{\mathcal{G}}$.
\begin{definition}
[Fixed point] An object $f\in O^{\mathcal{G}}$ is the fixed point of $g\in \mathcal{G}$ if it remains unchanged under the action of $g$, {\it i.e.} $g(f) = f$.
\end{definition}
\begin{definition}
[Orbit] The orbit $\mathcal{O}_o$ of an object $o\in O^{\mathcal{G}}$ is given by the set
\begin{align*}
\mathcal{O}_o:=\bigcup_{g\in \mathcal{G}} \{g(o)\}.
\end{align*}
\end{definition}
It is straightforward to see that if two objects $o_1$ and $o_2$ are related by some group action then $\mathcal{O}_{o_1}=\mathcal{O}_{o_2}$. Thus the set of all orbits partitions the collection of objects $O$ into disjoint sets. Also, it can be noted that every object belongs to exactly one orbit. Here we recall the orbit-counting theorem by Burnside \cite{Burnside97}. 
\begin{lemma}
[Burnside's Lemma] For a group $\mathcal{G}$ acting on a collection of objects $O^{\mathcal{G}}$, the number of orbits is given by
\begin{align*}
    \mbox{\# of orbits} = \frac{1}{|\mathcal{G}|}\sum_{g\in\mathcal{G}} \mbox{\# of fixed points of g}
\end{align*}
where $|\mathcal{G}|$ is the cardinality of the group $\mathcal{G}$.
\end{lemma}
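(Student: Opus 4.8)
The plan is to prove Burnside's Lemma by the standard \emph{double-counting} argument applied to the set of fixed incidences, combined with the orbit--stabiliser relation. First I would introduce the incidence set
\begin{align*}
I := \{(g,o)\in\mathcal{G}\times O^{\mathcal{G}}~|~g(o)=o\},
\end{align*}
and evaluate its cardinality $|I|$ in two different ways. Summing first over the group elements, each $g\in\mathcal{G}$ contributes exactly its number of fixed points, so $|I|=\sum_{g\in\mathcal{G}}(\mbox{\# of fixed points of }g)$. This is precisely the quantity appearing (before division by $|\mathcal{G}|$) on the right-hand side of the claimed identity.

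Next I would count $|I|$ by summing over the objects instead. For a fixed $o\in O^{\mathcal{G}}$, the group elements that leave it invariant form the \emph{stabiliser} $\mathrm{Stab}(o):=\{g\in\mathcal{G}~|~g(o)=o\}$, so that $|I|=\sum_{o\in O^{\mathcal{G}}}|\mathrm{Stab}(o)|$. The one nontrivial structural input here is the orbit--stabiliser theorem, which for a finite group yields $|\mathrm{Stab}(o)|=|\mathcal{G}|/|\mathcal{O}_o|$; I would either invoke it directly or establish it quickly via the natural bijection between the left cosets of $\mathrm{Stab}(o)$ in $\mathcal{G}$ and the points of the orbit $\mathcal{O}_o$.

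The final step is to regroup the object-sum by orbits. Since every object lies in exactly one orbit (as noted just before the lemma) and all objects in a common orbit $\mathcal{O}$ share the same orbit size, each orbit contributes $\sum_{o\in\mathcal{O}}|\mathcal{G}|/|\mathcal{O}|=|\mathcal{O}|\cdot(|\mathcal{G}|/|\mathcal{O}|)=|\mathcal{G}|$ to the total. Hence $|I|=|\mathcal{G}|\times(\mbox{\# of orbits})$. Equating the two evaluations of $|I|$ and dividing through by $|\mathcal{G}|$ then gives the stated formula. I expect the only genuine subtlety to be the orbit--stabiliser step; everything else is bookkeeping. It is worth flagging that finiteness of $\mathcal{G}$ is used essentially here (so that the coset count is finite and division by $|\mathcal{G}|$ is legitimate), which is exactly the hypothesis assumed in the statement.
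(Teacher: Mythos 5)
Your argument is the standard and correct proof of Burnside's Lemma: double-count the incidence set $I=\{(g,o)~|~g(o)=o\}$, once over group elements to get $\sum_g(\mbox{\# fixed points of }g)$ and once over objects to get $\sum_o|\mathrm{Stab}(o)|$, then apply the orbit--stabiliser theorem and regroup by orbits. Note, however, that the paper does not prove this lemma at all --- it is recalled as a classical result with a citation to Burnside's book and used purely as a computational tool for counting orbits of $8$-element effect sets under local reversible transformations --- so there is no in-paper proof to compare against; your write-up simply supplies the missing (and entirely standard) justification. One small point worth making explicit: besides finiteness of $\mathcal{G}$, you also implicitly need the object set $O^{\mathcal{G}}$ to be finite for the sums and the orbit count to be well defined, which holds in the paper's application since the objects are $8$-element subsets of the finite set $\mathcal{P}_{ef}[n]$ of product effects.
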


\subsubsection{Group structure of LRT in bipartite polygon models}
Since $\mathcal{T}_{k_1}^{s_1}$ represents rotations and reflections, {\it i.e.}, an element of dihedral group $\mathcal{D}_{2n}$, we can observe that $\left\{\mathcal{T}_{k_1}^{s_1}\otimes\mathcal{T}_{k_2}^{s_2}|s_1,s_2\in\{+,-\}~\mbox{and}~k_1,k_2\in\{1,\dots,n\}\right\}$ also forms a group which we denote by $\mathcal{D}_{2n}^{\times2}$. We use this notation because this group is isomorphic to the group formed by the cartesian product of $\mathcal{D}_{2n}$ with itself. That is $\mathcal{D}_{2n}^{\times2}\ \cong \mathcal{D}_{2n} \times \mathcal{D}_{2n}$.
Now we define the set of objects $O^{\mathcal{D}_{2n}^{\times2}}$ as 
\begin{align*}
O^{\mathcal{D}_{2n}^{\times2}}=\left\{\left\{a_1,\dots,a_8\right\}~|~a_i\in\mathcal{P}_{ef}[n]\right\}.
\end{align*}

An object $\{f_1,\dots,f_8\}$ is called a fixed point of $\mathcal{T}_{k_1}^{s_1}\otimes\mathcal{T}_{k_2}^{s_2}\in\mathcal{D}_{2n}^{\times2}$ if
\begin{align*}
\mathcal{T}_{k_1}^{s_1}\otimes\mathcal{T}_{k_2}^{s_2}\left(\left\{f_1,\dots,f_8\right\}\right)=\left\{f_1,\dots,f_8\right\}.
\end{align*}
The orbit of an object $\left\{h_1,\dots,h_8\right\}\in O^{\mathcal{D}_{2n}^{\times2}}$ is given by 
\begin{align*}
    \mathcal{O}_{\left\{h_1,\dots,h_8\right\}}=\bigcup_{g\in \mathcal{D}_{2n}^{\times2}} \left\{g\left(\left\{h_1,\dots,h_8\right\}\right)\right\}.
\end{align*}
Since objects of one orbit are not connected to objects of another orbit by LRT, therefore, instead of evaluating all the $^{n^2}C_8$ possible cases for finding valid solutions, we can restrict ourselves to evaluating just one object from each orbit. This helps us in reducing the search space drastically. A list comparing this reduction is shown in Table \ref{tab2}. 
\begin{table}[t]
\centering
\begin{tabular}{c|c|c}
\hline
~~~~~n~~~~~ & $^{n^2}C_8$ & $\#$ of orbits  \\
\hline
$4$ & $12870$ & $283$\\
\hline
$5$ & $1081575$ & $11103$\\
\hline
$6$ & ~~~~~$30260340$~~~~~ & ~~~~~$213962$~~~~~ \\
\hline
\end{tabular}
\caption{Number of orbits for bipartite polygon systems.}\label{tab2}
\end{table}
Let $\mathbb{O}$ denote set of all orbits and let $x(\mathcal{O})$ be the representative object of an orbit $\mathcal{O}\in\mathbb{O}$. We can now define a set  $\mathcal{X}_1$ as
\begin{align*}
    \mathcal{X}_1 = \{x(\mathcal{O})~|~\mathcal{O}\in\mathbb{O}\}.  
\end{align*}
Now we check whether the elements in $\mathcal{X}_1$ lead to a unique solution, {\it i.e.} whether the 8 hyperplanes corresponding to this object intersect at exactly one point. Using this we define the set $\mathcal{X}_2$ as
\begin{align*}
    \mathcal{X}_2 = \left\{y_1\in\mathcal{X}_1 ~|~\mbox{$y_1$ leads to a unique solution}\right\}  .
\end{align*}
The intersection of $8$ planes doesn't necessarily imply that the solution satisfies all the positivity conditions since the intersection of the 8 hyper-planes could lie outside the state space. So we define $\mathcal{X}_3$ as
\begin{align*}
    \mathcal{X}_3 = \left\{y_2\in \mathcal{X}_2 ~|~\mbox{$y_2$ satisfies all     positivity conditions}\right\} . 
\end{align*}
The elements in $\mathcal{X}_3$ are then analyzed to check if any of them are connected by local reversible transformations, which then lead to the end result of entangled states representative of each entanglement class.

\subsubsection{Orbit counting: Box world}
In the elementary Box world theory reversible transformations are the four rotations about the perpendicular axis passing through the centre, {\it i.e.} rotation about $0,~\pi/2,~\pi$, and $3\pi/2$ radians; and four reflections (along the two diagonals and the two lines connecting the midpoints of the parallel sides). These four reflections can also be obtained by fixing only one reflections and then followed by four rotations. We denote the four different rotations by $\left\{\mathbb{I},r,r^2,r^3\right\}$. Let us take the reflection that takes the effects $e_0,e_1,e_2$ and $e_3$ to  $e_3,e_2,e_1$ and $e_0$, respectively to be $f$\footnote{Please note that here the sub-index of $e_i$ takes values from $\{0,\cdots,3\}$, whereas in main manuscript it takes values from $\{1,\cdots,4\}$. However, this does not change the orbit counting.}. Then the four reflections are given by $\left\{f,rf,r^2f,r^3f\right\}$, yielding the full set of reversible transformations
\begin{align*}
\mathcal{R}=\left\{\mathbb{I},r,r^2,r^3,f,rf,r^2f,r^3f\right\},
\end{align*}
where $ab$ operation implies operation $b$ is followed by operation $a$. In the case of the composition of two such systems shared between Alice and Bob, we will denote the sets as 
\begin{align*}
\mathcal{R_A}&=\left\{\mathbb{I}_A,r_A,r^2_A,r^3_A,f_A,r_A f_A,r^2_A f_A,r^3_A f_A\right\},\\
\mathcal{R_B}&=\left\{\mathbb{I}_B,r_B~,r^2_B,r^3_B,f_B~,r_B f_B~,r^2_B f_B~,r^3_B f_B\right\}.
\end{align*}
Thus the set of all Local Reversible transformations on the composite system is given by
\begin{align*}
\mathcal{LR}_{box}=\left\{t_A\otimes t_B|t_A\in\mathcal{R_A},~t_B\in\mathcal{R_B}\right\}.
\end{align*}
Any product effect can be assigned a natural number using the rule $e_i\otimes e_j\to4i+j+1$, where $i,j\in\{0,1,2,3\}$, which is shown in Table \ref{tab3}.
\begin{table}[t!]
\centering
\begin{tabular}{c||c|c|c|c|}
~~~$\otimes$~~~&~~~$e^B_0$~~~&~~~$e^B_1$~~~&~~~$e^B_2$~~~&~~~$e^B_3$~~~ \\
\hline\hline
~~~$e^A_0$~~~&~~~$1$~~~&~~~$2$~~~&~~~$3$~~~&~~~$4$~~~ \\
\hline
~~~$e^A_1$~~~&~~~$5$~~~&~~~$6$~~~&~~~$7$~~~&~~~$8$~~~ \\
\hline
~~~$e^A_2$~~~&~~~$9$~~~&~~~$10$~~~&~~~$11$~~~&~~~$12$~~~ \\
\hline
~~~$e^A_3$~~~&~~~$13$~~~&~~~$14$~~~&~~~$15$~~~&~~~$16$~~~ \\
\hline
\end{tabular}
\caption{The effect $e_i\otimes e_j$ is assigned a natural number following the rule $e_i\otimes e_j\to4i+j+1$, where $i,j\in\{0,1,2,3\}$. For instance, $e^A_2\otimes e^B_3$ is assigned $12$ (third row fourth column).}
\label{tab3}
\end{table}
\begin{figure}[t!]
\centering
\includegraphics[width=0.65\textwidth]{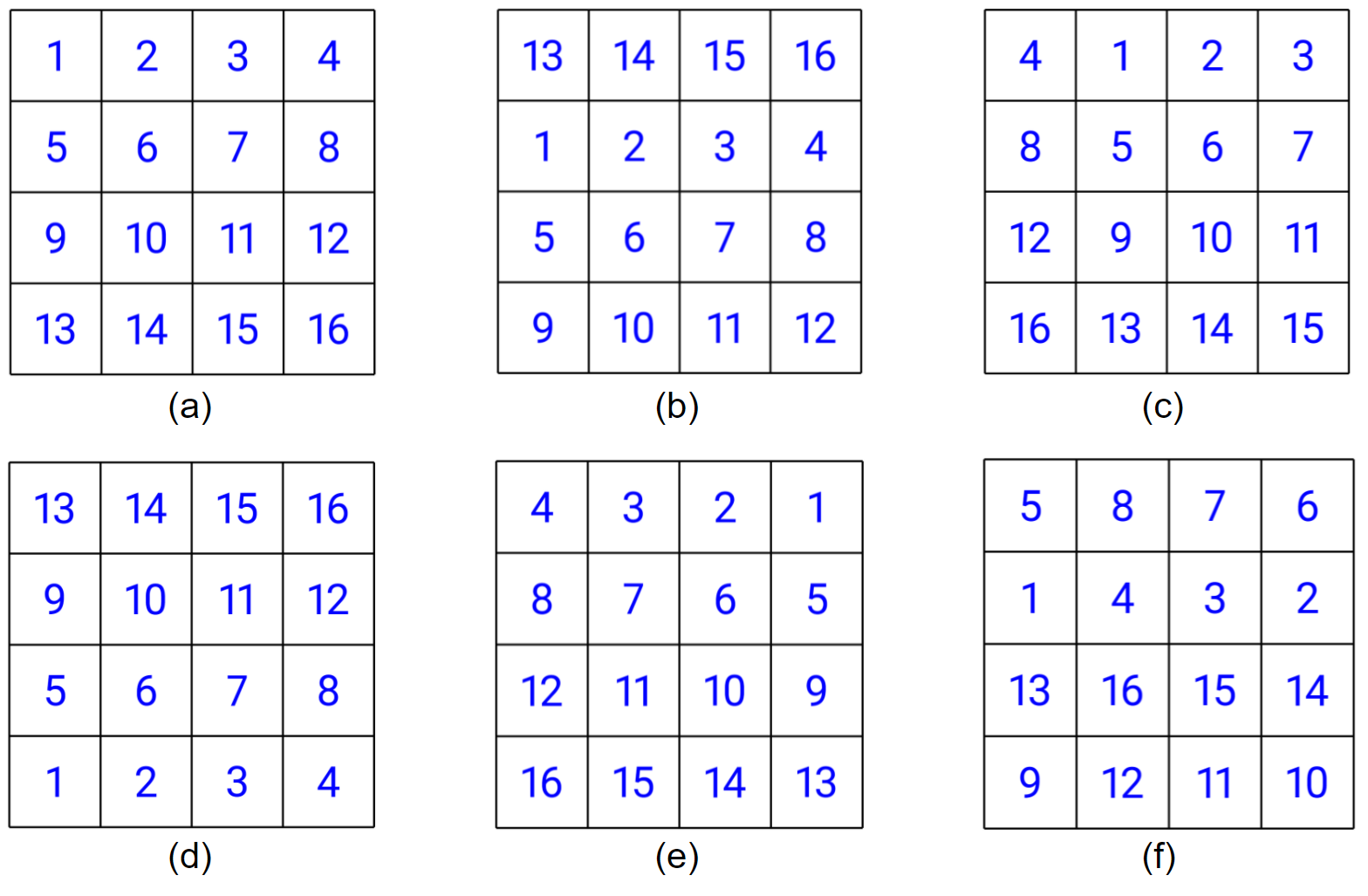}
\caption{Under the local reversible transformations $(\mathcal{LR}_{box})$, the arrangement of the numbers (associated with the product effect) in Table \ref{tab3} gets modified. Here we show few examples: (a) $\mathbb{I}_A\otimes\mathbb{I}_B$, (b) $ r_A\otimes\mathbb{I}_B$, (c) $\mathbb{I}_A\otimes r_B$, (d) $f_A\otimes\mathbb{I}_B$, (e) $\mathbb{I}_A\otimes f_B$, (f) $f_Ar_A^2\otimes f_Br_B^3$.}\label{box_7}
\end{figure}
With this notation, if Alice applies the transformation $r_A$ on her part then each row in Table \ref{tab3} steps downward and the last row wraps back to the first row. Similarly, for $r_B$ each column shifts one step rightwards and the last column wraps back to the first column. On the other hand, the operations $f_A$ reflects the Table \ref{tab3}  about the central horizontal line, {\it i.e.} $\mbox{row}-1\leftrightarrow \mbox{row}-4$ and $\mbox{row}-2\leftrightarrow \mbox{row}-3$; and under $f_B$ we have $\mbox{column}-1\leftrightarrow \mbox{column}-4$ and $\mbox{column}-2\leftrightarrow \mbox{column}-3$. All other local reversible transformations of Table \ref{tab3} can be obtained by suitable combinations of these elementary operations (see Fig.\ref{box_7}).

We now move on to calculating the number of orbits. For that, according to Burnside's Lemma, we need to find the number of fixed points for every local reversible transformation. Consider the transformation $r_Af_A\otimes \mathbb{I}_B$. As shown in Fig.\ref{box_9}, the set of effects $\{1,2,3,4,9,10,11,12\}$ remain fixed under this transformation, whereas the pairs $\{5,13\}$,$\{6,14\}$,$\{7,15\}$ and $\{8,16\}$ exchange places among themselves. In order to find the number of fixed points of $r_Af_A\otimes \mathbb{I}_B$ we need to choose a set of $8$ effects that remains invariant under the action of  $r_Af_A\otimes \mathbb{I}_B$. We have the following possibilities:
\begin{figure}[b!]
\centering
\includegraphics[width=0.68\textwidth]{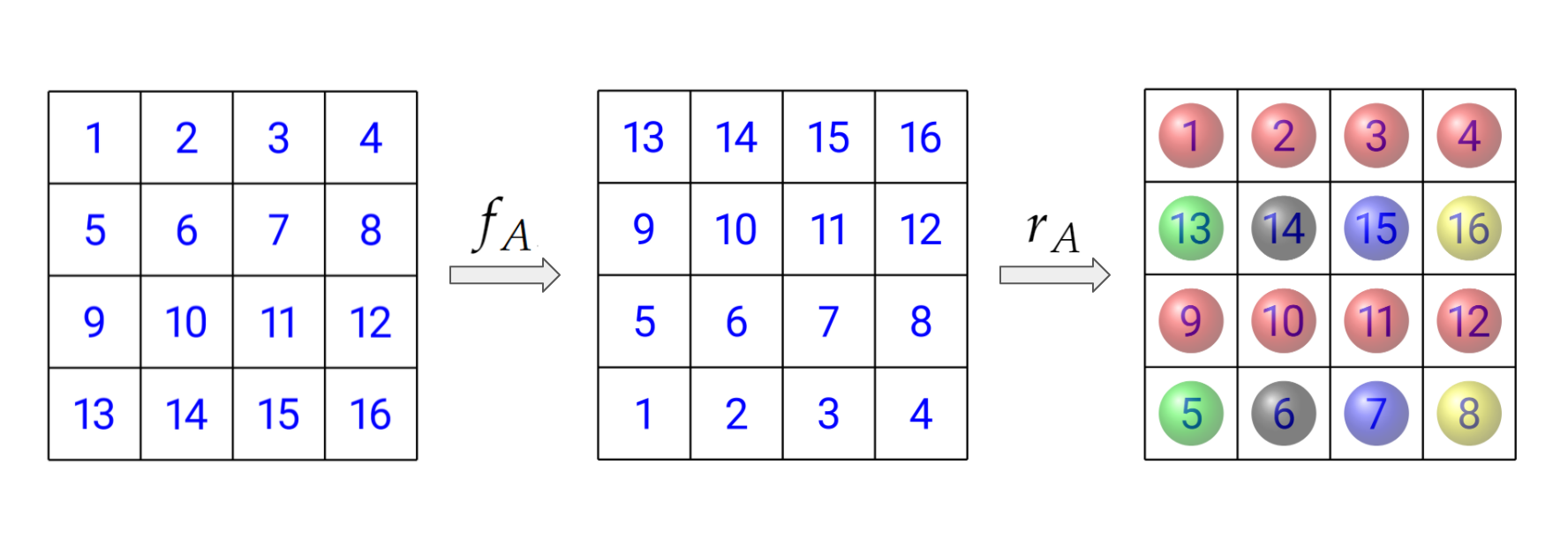}
\caption{(Color online) The effects colored red remain fixed under the action of $r_Af_A\otimes \mathbb{I}_B$. The effects colored green $\{5,13\}$ is a pair of effects which flip to each other upon the action of $r_Af_A\otimes \mathbb{I}_B$. Similarly, the black pair $\{6,14\}$, the blue pair $\{7,15\}$, and the yellow pair $\{8,16\}$ flip to each other under action of $r_Af_A\otimes \mathbb{I}_B$.}\label{box_9}
\end{figure}
\begin{itemize}
\item[1)] choose all the fixed $8$ effects $\implies^8C_8\times^4C_0$ possibilities,
\item[2)] choose $6$ fixed effects and $1$ pair $\implies^8C_6\times^4C_1$ possibilities,
\item[3)] choose $4$ fixed effects and $2$ pairs $\implies^8C_4\times^4C_2$ possibilities,
\item[4)] choose $2$ fixed effects and $3$ pairs $\implies^8C_2\times^4C_3$ possibilities,
\item[5)] choose all $4$ pairs $\implies^8C_0\times^4C_4$ possibilities,
\end{itemize}
Therefore we have a total of $646$ different fixed points for $r_Af_A\otimes \mathbb{I}_B$. For the other group elements in $\mathcal{LR}_{box}$ we can carry a similar procedure to count the number of fixed points, which has been shown in Table \ref{tab4}.

\begin{table}[t!]
\centering
\begin{tabular}{c||c|c|c|c|c|c|c|c|}
$\otimes$&~$\mathbb{I}_B$~&~~$r_B$~~&~~$r^2_B$~~&~~$r^3_B$~~ &~~$f_B$~~&~$r_Bf_B$~&~$r^2_Bf_B$~&~$r^3_Bf_B$~\\
\hline\hline
$\mathbb{I}_A$&$12870$&$6$&$70$&$6$&$70$&$646$&$70$&$646$\\
\hline
$r_A$&$6$&$6$&$6$&$6$&$6$&$6$&$6$&$6$\\
\hline
$r^2_A$&$70$&$6$&$70$&$6$&$70$&$70$&$70$&$70$\\
\hline
$r^3_A$&$6$&$6$&$6$&$6$&$6$&$6$&$6$&$6$\\
\hline
$f_A$&$70$&$6$&$70$&$6$&$70$&$70$&$70$&$70$\\
\hline
$r_Af_A$&$646$&$6$&$70$&$6$&$70$&$150$&$70$&$150$\\
\hline
$r^2_Af_A$&$70$&$6$&$70$&$6$&$70$&$70$&$70$&$70$\\
\hline
$r^3_Af_A$&$646$&$6$&$70$&$6$&$70$&$150$&$70$&$150$\\
\hline
\end{tabular}
\caption{Number of fixed point for all the local reversible transformations $g=t_A\otimes t_B\in\mathcal{LR}_{box}.$ }
\label{tab4}
\end{table}
Thus from Table \ref{tab4} we have the total number of orbits which turns out to be 
\begin{align*}
\mbox{\# of orbits}&= \frac{1}{|\mathcal{LR}_{box}|}\sum_{g\in\mathcal{LR}_{box}} \mbox{\# of fixed points of $g$}\\
&= \frac{1}{64}\times\mbox{(sum of all entries in Table \ref{tab4})}\\
&=\frac{18112}{64}=283.
\end{align*}
Following a similar counting procedure for higher gons, we obtain Table \ref{tab2}. 

\bibliographystyle{plain}

\end{document}